\pgfplotsset{compat=newest}
\newtheorem{theorem}{Theorem}
\newtheorem{proposition}{Proposition}
\newtheorem{corollary}{Corollary}
\newtheorem{definition}{Definition}
\newtheorem{example}{Example}
\newcommand\mm[1]{\,\stackengine{.9ex}{}{\tiny\textcolor{blue}{#1}}{O}{l}{F}{F}{L}}
\newcommand{\ot}{\leftarrow}
\newcommand{\argmin}{\mathop{\rm arg\,min}}
\newcommand{\argmax}{\mathop{\rm arg\,max}}
\newcommand{\OPT}{\mathrm{OPT}}
\newcommand{\ALG}{\mathrm{Alg}}
\newcommand{\cl}{\mathop{\mathrm{cl}}\nolimits}
\newcommand{\cI}{\mathcal{I}}
\newcommand{\cU}{\mathcal{U}}
\newcommand{\StP}{$\mathrm{\Sigma_{2}^{P}}$}
\newcommand{\EASSP}{\textbf{$\boldsymbol{\exists\forall}$\texttt{SubsetSum}}}
\newcommand{\EAtDM}{\textbf{$\boldsymbol{\exists\forall}$\texttt{3DM}}}
\newcommand{\DMP}{\textbf{\texttt{DisjointMatching}}}
\newcommand{\Usub}{\cU_{\mathrm{sub}}}
\newcommand{\Uadd}{\cU_{\mathrm{add}}}
\newcommand{\Ucard}{\cU_{\mathrm{card}}}
\newcommand{\Gcap}{\Gamma_{\mathrm{cap}}}
\newcommand{\Gmat}[1]{\Gamma^{(#1)}_{\mathrm{mat}}}
\newcommand{\Gvec}[1]{\Gamma^{(#1)}_{\mathrm{knap}}}
\title{Approximately Stable Matchings with General Constraints}
\author{{Yasushi Kawase}\\[5pt]
  {\normalsize Tokyo Institute of Technology and} \\
  {\normalsize RIKEN AIP Center, Tokyo, Japan.}\\
  {\normalsize \texttt{kawase.y.ab@m.titech.ac.jp}}
  \and 
  {Atsushi Iwasaki}\\[5pt]
  {\normalsize University of Electro-Communications and} \\
  {\normalsize RIKEN AIP Center, Tokyo, Japan.}\\
  {\normalsize \texttt{iwasaki@is.uec.ac.jp}}
}
\date{}
\begin{document}
\begin{titlingpage}
\maketitle
\begin{abstract}
This paper focuses on two-sided matching
where one side (a hospital or firm) is matched to the other side (a doctor or worker) 
so as to maximize a cardinal objective under general feasibility constraints.
In a standard model, even though multiple doctors can be matched to a single hospital, 
a hospital has a \emph{responsive preference} and  a \emph{maximum quota}.
However, in practical applications, a hospital has some complicated cardinal preference and constraints. 
With such preferences (e.g., submodular) and constraints (e.g., knapsack or matroid intersection), stable matchings may fail to exist. 
This paper first determines the complexity of checking and computing stable matchings based on preference class and constraint class.
Second, we establish a framework to analyze this problem on \emph{packing} problems and
the framework enables us to access the wealth of online packing algorithms
so that we construct \emph{approximately stable} algorithms as a variant of generalized deferred acceptance algorithm. 
We further provide some inapproximability results. 
\end{abstract}
\end{titlingpage}

\section{Introduction}
This paper studies a two-sided, one-to-many matching model 
when one side (a hospital or firm) is allocated members from the other side (a doctor or worker), 
covering constraints to satisfy practical or social demands and prohibiting infeasible allocation (matching). 
The theory of two-sided matching has been extensively developed, 
as illustrated by the comprehensive surveys~\cite{Roth:CUP:1990,manlove:2013}.

Matching with constraints has been prominent across computer science and economics since the seminal work by~\cite{kamakoji-basic}. 
In many applications, various constraints are often imposed on an outcome, 
e.g., \emph{type-specific quotas} on hospitals to assign several different types (skills) of doctors~\cite{Abdulkadiroglu:AER:2003},
\emph{budget constraints} on hospitals to limit the total amount of wages~\cite{abizada:te:2016,KI2017,KI2018}.
The current paper exactly covers these complicated constraints and further generalizes them. 
Specifically, we consider general constraints of upper bounds known as \emph{independence system constraints}, i.e., any subset of a feasible set of doctors is also feasible.
We assume that each constraint is represented by a \emph{capacity} (maximum quota), an \emph{intersection of multiple matroids}, or a \emph{multi-dimensional knapsack}.
It should be noted that every independence system constraint can be represented by an intersection of multiple matroids and a multi-dimensional knapsack.
In addition, we assume that each hospital's preference is represented by a utility function.
We consider three important classes of cardinal utilities: \emph{cardinality}, \emph{additive}, and \emph{submodular}.

With such preferences and constraints, stable matchings may fail to exist.
Determining whether a given instance has a stable matching is hard in general. 
It is known to be \StP{}-complete when hospitals' utilities are additive, and the constraints are given as ($1$-dimensional) knapsack~\cite{HISY2017}. 
Note that the existence of stable matchings is guaranteed 
when the utilities are additive and the constraints are matroid~\cite{kty:jet:2018}
or the utilities are cardinality (matched size) and the constraints are knapsack~\cite{KI2017}.

There are several possibilities to circumvent the nonexistence problem. 
One modifies the notion of the stability and proposes a variant of the Deferred Acceptance (DA) algorithm~\cite{hafalir2013effective,kamakoji-basic,abizada:te:2016}.
Another restricts hospitals' priorities to ensure the existence of a stable matching, e.g., lexicographic priorities~\cite{dean:ifip:2006}.
Alternatively, Kawase and Iwasaki~\cite{KI2017} and Nguyen \textit{et al.}~\cite{NNT2019} focused on
\textit{near-feasible} stable matchings that approximately satisfy each budget of the hospitals.

This paper focuses on \emph{approximately stable} matchings where
the participants are only willing to change the assignments for a multiplicative improvement of a certain amount~\cite{arkin2009}.
This idea can be interpreted as one in which
a hospital in a blocking pair changes its match as soon as its utility after the change increases by any (arbitrarily small) amount.
Arkin \textit{et al.}~\cite{arkin2009} examined a stable roommate problem, which is a non-bipartite one-to-one matching problem, 
while we examine a bipartite one-to-many matching problem.
It is reasonable for a hospital to change its assignments only in favor of a significant improvement; even though the grass may be greener on the other side, crossing the fence takes effort.

\medskip
\noindent\textbf{Our results}\quad
First, we analyze the problem of checking the stability
on (offline) \emph{packing problems} so that we understand the features and obtain the complexity results, which vary according to hospitals' utilities and imposed constraints. In particular, Theorem~\ref{thm:check} proves that given a matching, checking whether it is stable or not is equivalent to solving a packing problem.
Once we know the complexity of a packing problem in a given setting (utilities and constraints), we obtain that of the checking problem associated with the setting. 
Our results are summarized in Table~\ref{tab:summary}\,(\subref{stab:stab}). 

Second, Table~\ref{tab:summary}\,(\subref{stab:exist}) summarizes our trichotomy results characterizing the complexity of determining whether a given instance has a stable matching or not.
The problem is polynomially solvable in very restricted classes of utilities and constraints, while it is NP-complete or \StP{}-complete in the other settings. 
Here, \StP{} (also known as $\mathrm{NP^{NP}}$) is the class of problems solvable in polynomial time by a nondeterministic Turing machine with an oracle for some NP-complete problem.
To prove the NP-hardness, we give a reduction from \DMP{}, which is NP-complete~\cite{frieze1983}.
In addition, we prove the \StP{}-hardness by reductions from the \EAtDM{} or \EASSP{}, which are \StP-complete~\cite{berman1997,mcloughlin1984}.

Finally, we introduce a framework that leads us to construct algorithms that find approximately stable matchings as a variant of generalized deferred acceptance (GDA). 
Intuitively, each hospital utilizes an \emph{online} packing algorithm while running a GDA procedure.
By applying the property called \emph{$\alpha$-approximation}~\cite{KI2018}, we show that if there exists an $\alpha$-competitive algorithm for a class of online packing problems, then we can construct a DA algorithm that always yields an $\alpha$-stable matching for the markets in a corresponding class. 
Table~\ref{tab:summary}\,(\subref{stab:apx}) summarizes the upper and lower bounds of the approximation ratios that we obtained. Note that Theorem~\ref{thm:general_lower} provides a basis to derive some novel lower bounds. 
Here, for the knapsack constraints, we assume that the weight of each element on every dimension is at most a $(1-\epsilon)$ fraction of the total capacity.

\begin{table}[ht]
  \caption{Summary of results ($k\ge 3$ and $\rho\ge 2$)}\label{tab:summary}
  {\centering
  \begin{subtable}{.49\linewidth}
    \subcaption{Complexity of checking the stability}\label{stab:stab}
    \scalebox{0.92}{%
      {\renewcommand\arraystretch{1.1}
        \setlength{\tabcolsep}{4pt}
        \begin{tabular}[htbp]{p{22mm}|lll}\toprule
          \diagbox[width=24mm]{\scriptsize Constraints}{\scriptsize Hosp.\ Utils}& Cardinality & Additive & Submodular \\ \midrule
          Capacity          & P          & P          & coNP-c\\\hline
          Matroid           & P　　    　& P          & coNP-c\\
          2-mat.\ int.    & P          & P          & coNP-c\\
          $k$-mat.\ int.  & coNP-c     & coNP-c     & coNP-c\\\hline
          1-dim.\ knap.     & P          & coNP-c\mm{$\dagger$} & coNP-c\\
          $\rho$-dim.\ knap.& coNP-c     & coNP-c     & coNP-c\\
          \bottomrule
        \end{tabular}}}
  \end{subtable}%
  \begin{subtable}{.02\linewidth}\mbox{}\end{subtable}%
  \begin{subtable}{.49\linewidth}
    \subcaption{Complexity of checking existence}\label{stab:exist}
    \scalebox{0.92}{%
      {\renewcommand\arraystretch{1.1}
        \setlength{\tabcolsep}{4pt}
        \begin{tabular}[htbp]{p{22mm}|lll}\toprule
        \diagbox[width=24mm]{\scriptsize Constraints}{\scriptsize Hosp.\ Utils}& Cardinality & Additive & Submodular \\ \midrule
          Capacity          & P\mm{*}          & P\mm{*}                & \StP-c\mm{b}\\\hline
          Matroid           & P\mm{*}          & P\mm{*}                & \StP-c\mm{b}\\
          2-mat.\ int.    & NP-c\mm{a}       & NP-c\mm{a}             & \StP-c\mm{b}\\
          $k$-mat.\ int.  & \StP-c\mm{c}     & \StP-c\mm{c}           & \StP-c\mm{b,c}\\\hline
          1-dim.\ knap.     & P\mm{$\ddagger$} & \StP-c\mm{$\dagger$}   & \StP-c\mm{$\dagger$,b}\\
          $\rho$-dim.\ knap.& \StP-c\mm{d}     & \StP-c\mm{$\dagger$,d} & \StP-c\mm{$\dagger$,b,d}\\
          \bottomrule
        \end{tabular}}}
  \end{subtable}\\[20pt]}
  \begin{subtable}{\linewidth}
    \centering
    \subcaption{Approximation ratios (Upper Bound$\bigm/$Lower Bound)}\label{stab:apx}
      {\renewcommand\arraystretch{1.1}
        \setlength{\tabcolsep}{9pt}
        \begin{tabular}[htbp]{p{30mm}|lll}\toprule
          \diagbox[width=32mm]{\scriptsize Constraints}{\scriptsize Hosp.\ Utils}& Cardinality & Additive & Submodular \\ \midrule
          Capacity          & $1\mm{*} \bigm/ 1$ & $1\mm{*} \bigm/ 1$  & $4\mm{g} \bigm/ 1.28\mm{l}$\\\hline
          Matroid           & $1\mm{*} \bigm/ 1$ & $1\mm{*} \bigm/ 1$  & $4\mm{g} \bigm/ 1.28\mm{l}$\\
          2-matroid int.    & $2\mm{e} \bigm/ 2\mm{k}$    & $(\!\sqrt{2}{+}1)^2\mm{f} \bigm/ 2\mm{k}$ & $8\mm{g} \bigm/ 2\mm{k}$\\
          $k$-matroid int.  & $k\mm{e} \bigm/ 2\mm{k}$    & \scalebox{0.9}{$(\!\!\sqrt{k}{+}\sqrt{k{-}1}\!)^2\mm{f} \bigm/ k\mm{m}$}  & $4k\mm{g} \bigm/ k\mm{m}$\\\hline
          1-dim.\ knap.     & $1\mm{$\ddagger$} \bigm/ 1$    & $\frac{1}{\epsilon}\mm{$\ddagger$} \bigm/ \frac{1}{\epsilon}\mm{$\ddagger$}$ & $O(\frac{1}{\epsilon^2})\mm{j} \bigm/ \frac{1}{2\epsilon}\mm{n}$\\
          $\rho$-dim.\ knap.& $\rho\mm{h} \bigm/ 2\mm{k}$ & $\frac{\rho}{\epsilon}\mm{i} \bigm/ \frac{\rho}{2\epsilon}\mm{n}$  & $O(\frac{\rho}{\epsilon^2})\mm{j} \bigm/ \frac{\rho}{2\epsilon}\mm{n}$\\
          \bottomrule
        \end{tabular}}
  \end{subtable}\\[5pt]
  {\footnotesize \mm{*}~\cite{kty:jet:2018}; \mm{$\dagger$}~\cite{HISY2017};
  \mm{$\ddagger$}~\cite{KI2017};
  \mm{a}~Thm.~\ref{thm:2hard}; \mm{b}~Thm.~\ref{thm:Sigma2Pa}; \mm{c}~Thm.~\ref{thm:Sigma2Pb}; \mm{d}~Thm.~\ref{thm:Sigma2Pc};
  \mm{e}~Cor.~\ref{thm:cardmat}; \mm{f}~Cor.~\ref{thm:addmat}; \mm{g}~Cor.~\ref{thm:submat}; \mm{h}~Cor.~\ref{thm:cardknap}; \mm{i}~Cor.~\ref{thm:addknap}; \mm{j}~Cor.~\ref{thm:subknap};
  \mm{k}~Ex.~\ref{ex:nonexistence}; \mm{l}~Ex.~\ref{ex:submo};
  \mm{m}~Thm.~\ref{thm:lower_addmat}; \mm{n}~Thm.~\ref{thm:lower_addknap}.
  }
\end{table}

\section{Model}
This section describes our model of two-sided matching markets. 
A market is a tuple $(D,H,{\succ_D},\allowbreak u_H,\cI_H)$ where each component is defined as follows. 
There is a finite set of doctors $D=\{d_1,\ldots,d_n\}$ and a finite set of hospitals $H=\{h_1,\ldots,h_m\}$.
We denote by $\succ_D={(\succ_d)}_{d\in D}$ the doctors' preference profile 
where $\succ_d$ is the strict relation of $d\in D$ over $H\cup\{\emptyset\}$; 
$s\succ_d t$ means that $d$ strictly prefers $s$ to $t$, where $\emptyset$ denotes being unmatched.
Let $u_H={(u_h)}_{h\in H}$ denote the hospitals' cardinal preference profile where $u_h$ is the \emph{utility function} $u_h\colon 2^{D}\to\mathbb{R}_+$.
We assume that $u_h$ is \emph{normalized} (i.e., $u_h(\emptyset)=0$) and \emph{monotone} (i.e., $u_h(D'')\le u_h(D')$ for any $D''\subseteq D'\subseteq D$).
Let $\cI_H=(\cI_h)_{h\in H}$ denote the \emph{feasibility constraints} for hospitals where $\cI_h\subseteq 2^{D}$ for each $h\in H$.
We assume that $(D,\cI_h)$ is an \emph{independence system} for each $h\in H$, i.e.,
(I1) \(\emptyset\in\cI_h\) and
(I2) \(S\subseteq T\in\cI_h\) implies \(S\in\cI_h\).
Here, $\cI_h$ is called an \emph{independence family}.
We say that $h \in H$ is \emph{acceptable} to $d \in D$ if $h \succ_{d} \emptyset$.
In addition, $D' \subseteq D$ is said to be \emph{feasible} to $h \in H$ if $D'\in\cI_h$.

Note that a pair of utility $u_h$ and constraint $\cI_h$ can be represented by a single (non-monotone) utility function 
$\hat{u}_h\colon 2^D\to\mathbb{R}\cup\{-\infty\}$ such that $\hat{u}_h(X)=u_h(X)$ if $X\in\cI_h$ and $\hat{u}(X)=-\infty$ otherwise.
However, we treat it separately to define classes of markets clearly.

For a set of utility functions $\cU$ and a set of independence families $\Gamma$,
a market $(D,H,\succ_D,u_H,\cI_H)$ is $(\cU,\Gamma)$-market if $u_h\in\cU$ and $\cI_h\in\Gamma$ for all $h\in H$.
Namely, the $(\cU,\Gamma)$-markets are those in which the utilities and the feasibility constraints are restricted to be in $\cU$ and $\Gamma$, respectively.
We analyze the properties of the $(\cU,\Gamma)$-markets based on utility class $\cU$ and constraint class $\Gamma$.

A \emph{matching} is a set of pairs 
$\mu\subseteq \{(d,h)\in D\times H\mid h\succ_d\emptyset\}$
such that each doctor appears in at most one pair of $\mu$; that is,
we have $\bigl|\{(d',h')\in\mu \mid d'=d\}\bigr|\le 1$ 
for any $d\in D$.
For $d\in D$, $h\in H$, and a matching $\mu\subseteq D\times H$, we define $\mu(h)\coloneqq \{d'\mid (d',h)\in\mu\}~(\subseteq D)$ and $\mu(d)\coloneqq s~(\in H\cup\{\emptyset\})$ where $s\in H$ if $(d,s)\in\mu$ and $s=\emptyset$ otherwise.

We call matching $\mu$ \emph{feasible} if $\mu(h)\in \cI_h$ for all $h\in H$.
Given a matching $\mu$ and a real $\alpha\ge 1$,
a set of doctors $D'\in\cI_h$ is an \emph{$\alpha$-blocking coalition} for hospital $h$ if
(i) $h\succeq_{d} \mu(d)$ for any $d\in D'$ and (ii) $u_h(D')>\alpha\cdot u_h(\mu(h))$.%
\footnote{Although the second condition can also be defined in an additive manner: $u(X'')>u(X_h')+\alpha$, scale invariance should be required. 
For example, when a market has no $\alpha$-stable matching in the additive sense, a market with the hospitals' utilities that are multiplied by $100$ has no $100\alpha$-stable matching.}
We then obtain a stability concept. 
\begin{definition}[\cite{arkin2009}]\label{def:stability}
  A feasible matching $\mu$ is \emph{$\alpha$-stable} if
  there exists no $\alpha$-blocking coalition. 
\end{definition}
Note that $1$-stability is equivalent to the standard stability concept.
As we will see in Examples~\ref{ex:nonexistence} and~\ref{ex:submo}, $1$-stable matching may not exist in general.
Intuitively, $\alpha$-stable means that the hospitals are only willing to change the assignments for a multiplicative improvement of $\alpha$. 
This idea regards the value of $\alpha$ as a switching cost for the hospitals.

\subsection{Classes of Utilities and Constraints}\label{subsec:class}
Here, we formally describe three important classes of utility functions: \emph{cardinality}, \emph{additive}, and \emph{submodular}, which capture wide varieties of applications. We assume that utility functions are monotone and nonnegative throughout this paper.
First, a utility function $u\colon 2^D\to\mathbb{R}_+$ is called \emph{cardinality} if $u(D')=|D'|$ for all $D'\subseteq D$. 
Let us denote $\Ucard$ as a set of cardinality utility functions. 
Second, it is called \emph{additive} (or modular) if $u(D')=\sum_{d\in D'}u(d)$ holds for all $D'\subseteq D$ (where we denote $u(\{d\})$ by $u(d)$ for simplicity). 
Third, it is called \emph{submodular} if $u(D')+u(D'')\ge u(D'\cup D'')+u(D'\cap D'')$ holds for all $D',D''\subseteq D$ (see \cite{fujishige2005} for more details). 
As well as for the cardinality functions, we define the set of additive and submodular utilities as $\Uadd$ and $\Usub$, respectively. 
Here, $\Ucard\subsetneq \Uadd\subsetneq \Usub$ inevitably holds. 

Next, we formally define three classes of constraints: \emph{capacity}, \emph{matroid intersection}, and \emph{multidimensional knapsack}. 
An independence system \((D,\cI)\) represents a \emph{capacity} constraint of rank $r$ if $\cI=\{D'\subseteq D\mid |D'|\le r\}$. 
We define the set of independence families that represent rank $r$ capacities as $\Gcap^{(r)}$.
Also, we denote $\Gcap$ as $\bigcup_{r\in\mathbb{Z}_+}\Gcap^{(r)}$. 
This class represents a standard matching model with maximum quotas. 

An independence system \((D,\cI)\) is called \emph{matroid} if, 
for \(D',D''\in\cI\), \(|D'|<|D''|\) implies the existence of \(d\in D''\setminus D'\) such that \(D'\cup\{d\}\in\cI\). 
Moreover, it is called \emph{$k$-matroid intersection} if there exist $k$ matroids $(D,\cI^{1}),\dots,(D,\cI^{k})$ such that $\cI=\bigcap_{i\in[k]}\cI^{i}$,
where $[k]$ denotes set $\{1,\dots,k\}$.
We denote the set of independence families of the $k$-matroid intersection as $\Gmat{k}$.

Note that we have $\Gcap\subsetneq \Gmat{1}\subsetneq \Gmat{2}\subsetneq\cdots$.
We assume that each independence system $(D,\cI)$ in $\Gmat{k}$ is represented by $\bigcap_{i\in[k]} \cI^i$ with matroids $(D,\cI^i)$ ($i\in[k]$),
and every $\cI^i$ $(i\in[k])$ is given as a compact representation.
For more details on matroids, see, e.g., \cite{oxley1992mt}.

Furthermore, for a natural number $\rho$ and a positive real $\epsilon$, the set of \emph{$\rho$-dimensional knapsack with $\epsilon$-slack} $\Gvec{\rho,\epsilon}$
is defined as the set of independence families \(\cI\) that can be represented as \[\cI=\big\{D'\subseteq D\bigm| \textstyle\sum_{d\in D'}w(d,i)\le 1~\text{for all}\ i\in[\rho]\big\}\] with weights $w(d,i)\in[0,1-\epsilon]$ for each $d\in D$ and~$i\in[\rho]$.
We assume that independence systems $(D,\cI)$ in $\Gvec{\rho,\epsilon}$ are given by weights.

Note that every independence system can be represented by a matroid intersection and a multidimensional knapsack. 
The representability of matroid intersection is not stronger than that of multidimensional knapsack, and vice versa. 
Formally, $\Gmat{1}\not\subseteq\Gvec{\rho,\epsilon}$ and $\Gvec{1,\epsilon}\not\subseteq\Gmat{k}$ for any positive integers $\rho$ and $k$ and any nonnegative real $\epsilon<1$.

For independence system $(D,\cI)$ and subset $A\subseteq D$,
the \emph{restriction} of $(D,\cI)$ to $A$ is defined as $\cI|A\coloneqq \{X \mid A\supseteq X\in \cI\}$.
In this paper, we only consider a constraint class $\Gamma$ that is closed under the restriction, i.e.,  $\cI\in\Gamma$ implies $\cI|D'\in\Gamma$ for all $D'\subseteq D$.
We remark that $\Gcap^{(r)}$, $\Gmat{k}$, and $\Gvec{\rho,\epsilon}$ satisfy the condition.

\subsection{Applications}\label{subsec:applications}
This section illustrates several existing and critical situations raised in the literature of matchings with constraints 
and describes how our constraint representation (the feasible subsets of doctors) is reduced to such situations.

\smallskip\noindent\textbf{Type-specific quotas}
One of the simplest examples of the feasibility family is \emph{type-specific quotas}, in which
doctors are partitioned based on their types, and each hospital has type-specific quotas in addition to its capacity~\cite{Abdulkadiroglu:AER:2003}.
Fix hospital $h$ and
suppose that $(D_t)_{t\in T}$ is the partition of doctors with types $T$, i.e., $\bigcup_{t\in T}D_t=D$ and $D_t\cap D_{t'}=\emptyset$ for all $t,t'\in T$ with $t\ne t'$.
Let $q\in\mathbb{Z}_+$ be the capacity of $h$, and let $q_t\in\mathbb{Z}_+$ be the quota for type $t\in T$.
Then $D'~(\subseteq D)$ belongs to $\cI_h$ if and only if $|D'|\le q$ and $|D'\cap D_t|\le q_t$ for every $t\in T$.
In this case, $(D,\cI_h)$ is a matroid, and if the utilities are additive, 
a $1$-stable matching always exists and can be found efficiently. 
If the utilities are submodular, the matching no longer exists (see Example~\ref{ex:submo}).
However, we reveal that a $4$-stable matching always exists and is efficiently found (Theorem~\ref{thm:submat}).

\smallskip\noindent\textbf{Overlapping types}
Here, we generalize the type-specific quotas to those where each doctor can simultaneously belong to multiple types~\cite{kurata:jair:2017}.
Fix hospital $h$ and let $T_1,\dots,T_k$ be the sets of types.
For $i\in[k]$, let $(D^i_t)_{t\in T_i}$ be the partition of doctors with types $T_i$.
In addition, let $q\in\mathbb{Z}_+$ be the capacity of $h$, and let $q^i_t\in\mathbb{Z}_+$ be the quota for type $t\in T_i$.
Then $D'~(\subseteq D)$ belongs to $\cI_h$ if and only if $|D'|\le q$ and $|D'\cap D^i_t|\le q^i_t$ for every $i\in[k]$ and $t\in T_i$.
In this case, $(D,\cI_h)$ is a $k$-matroid intersection.
Kurata \emph{et al.}~\cite{kurata:jair:2017} treated quotas as soft constraints that can be violated and found a quasi-stable matching in a different manner.
To the best of our knowledge, we are the first to treat them as hard constraints that should be precisely satisfied and for finding an approximately stable matching.

\smallskip\noindent\textbf{Budget constraints}
Under budget constraints, one side (a firm or hospital) can make monetary transfers (offer wages) to the other (a worker or doctor), and each hospital has a {\em fixed budget}; that is, the total amount of wages allocated by each hospital to doctors is constrained~\cite{abizada:te:2016,KI2017,KI2018}. Let $w^h(d)$ be the offered wage from hospital $h$ to doctor $d$, and let $b_h$ be its budget. Then constraint $\cI_h$ is defined as $\cI=\{D'\subseteq D\mid \sum_{d\in D'}w^h(d)\le b^h\}$ and becomes a $1$-dimensional knapsack. In fact, we have $\cI_h\in\Gvec{1,\epsilon}$ with $\epsilon=1-\max_{d\in D}w^h(d)/b^h$.
Kawase and Iwasaki~\cite{KI2018} considered up to the additive utility case and we disentangled the submodular utility case.

\smallskip\noindent\textbf{Refugee match (multiple resource constraints)}
In refugee resettlement, different refugee families require such various services as school seats, hospital beds, slots in language classes, and employment training programs~\cite{DKT2016}.
Suppose that the set of services is $\Sigma$ and the capacity of $h$ (local areas in this context) is $b_s^h$ for each $s\in\Sigma$.
In addition, each doctor (refugee family) $d$ needs $w^h(d,s)$ units of service $s\in\Sigma$.
Then, the feasibility constraint    of hospital $h$ is defined: $\cI_h=\{D'\subseteq D\mid \sum_{d\in D'}w^h(d,s)\le b^h_s~(\forall s\in\Sigma)\}$.
In this case, the constraint is a $|\Sigma|$-dimensional knapsack.
In fact, we have $\cI_h\in\Gvec{|\Sigma|,\epsilon_h}$ with $\epsilon_h=1-\max_{s\in\Sigma}\max_{d\in D}w^h(d,s)/b^h_s$.

\subsection{Market without Stable Matchings} 
Let us show a market may not have $1$-stable matchings, even when the utilities are cardinality. 
\begin{example}\label{ex:nonexistence}
Consider a market with four doctors $D\coloneqq\{d_1,d_2,d_3,d_4\}$ and two hospitals $H\coloneqq\{h_1,h_2\}$. 
The preferences of the doctors are $h_1\succ_{d_i}h_2\succ_{d_i}\emptyset$ for $i=1,2$ and $h_2\succ_{d_i}h_1\succ_{d_i}\emptyset$ for $i=3,4$.
Suppose that each hospital has a cardinality utility.
The feasibility families are $\cI_{h_1}\coloneqq 2^{\{d_1,d_3\}}\cup 2^{\{d_2,d_4\}}$ and $\cI_{h_2}\coloneqq 2^{\{d_1,d_4\}}\cup 2^{\{d_2,d_3\}}$.
Then, it is straightforward to see that, for $1\le \alpha<2$, this market has no $\alpha$-stable matching (see Proposition~\ref{prop:nonexistence} in Appendix for the formal proof).

Let us remark that the independence systems $(D,\cI_{h_1})$ and $(D,\cI_{h_2})$ can be represented by a 2-matroid intersection and a 2-dimensional knapsack with $\epsilon~(<1/2)$. 
For example, $\cI_{h_1}$ is in $\Gmat{2}$ because $\cI_{h_1}=\cI^1\cap\cI^2$ for
  \begin{align*}
  \cI^1=\Bigl\{\hat{D}\subseteq D \Bigm| \substack{|\hat{D}\cap\{d_1,d_2\}|\le 1,\\ |\hat{D}\cap\{d_3,d_4\}|\le 1}\Bigr\}\quad\text{and}\quad
  \cI^2=\Bigl\{\hat{D}\subseteq D \Bigm| \substack{|\hat{D}\cap\{d_1,d_4\}|\le 1,\\ |\hat{D}\cap\{d_2,d_3\}|\le 1}\Bigr\}.
  \end{align*}
  Further, $\cI_{h_1}$ is in $\Gvec{2,\epsilon}$ because it is represented by the following weights: 
  \begin{align*}
  \begin{array}{llll}
  w(d_1,1)=1-\epsilon,& w(d_2,1)=1/2,& w(d_3,1)=0,         & w(d_4,1)=1/2,\\
  w(d_1,2)=0,         & w(d_2,2)=1/2,& w(d_3,2)=1-\epsilon,& w(d_4,2)=1/2.
  \end{array}
  \end{align*}
\end{example}
Moreover, if hospitals' utilities are submodular, a market fails to have $1$-stable matchings even under capacity constraints.  
\begin{example}\label{ex:submo}
  Consider a market with four doctors $D\coloneqq\{d_1,d_2,d_3,d_4\}$ and two hospitals $H\coloneqq\{h_1,h_2\}$. 
  The preferences of the doctors are 
  $h_1\succ_{d_i}\emptyset\succ_{d_i}h_2$ for $i=1,2$,
  $h_2\succ_{d_3}h_1\succ_{d_3}\emptyset$, and
  $h_1\succ_{d_4}h_2\succ_{d_4}\emptyset$.
  Suppose the $\cI_{h_1}$ and $\cI_{h_2}$ are capacity constraints of rank $2$ and rank $1$, respectively.
  Let $u_{h_1}$ be a submodular utility such that
  \(u_{h_1}(D')\coloneqq \sum_{e\in \bigcup_{d_i\in D'}A_i}w(e)\) where
  $A_1=\{a_{1},a_{3}\}$, $A_2=\{a_{2},a_{4}\}$, $A_3=\{a_{3},a_{4},a_{5}\}$, $A_4=\{a_{1},a_{2}\}$,
  $w(a_1)=w(a_2)=w(a_5)=4,$ and $w(a_3)=w(a_4)=\sqrt{17}-1$.
  Here, $u_{h_1}$ is clearly submodular since it is a weighted-coverage function. 
  Let $u_{h_2}$ be an additive utility such that $u_{h_2}(d_3)=1$ and $u_{h_2}(d_4)=2$.
  Then, it is straightforward to see that, there exists no $(1+\sqrt{17})/4~(\approx 1.28)$-stable matching in this market (see Proposition~\ref{prop:submo} in Appendix for the formal proof).
\end{example}

\section{Checking the Stability of a Given Matching}
In this section, we discuss the computational complexity of checking the $\alpha$-stability of a given matching.
We are going to prove that the problem is equivalent to computing an offline \emph{packing problem} which finds an outcome or allocation $X\in\cI$ that maximizes a given utility $u(X)$.
Note that $u$ and $\cI$ are given from $\cU$ and $\Gamma$. Formally, we call it the  $(\cU,\Gamma)$-packing problem, which corresponds with a $(\cU,\Gamma)$-market. 
\begin{theorem}\label{thm:check}
Fix a set of utility functions $\cU$ and a set of independence families $\Gamma$.
If the $(\cU,\Gamma)$-packing problem is solvable in polynomial time,
then the $\alpha$-stability of a given matching in a $(\cU,\Gamma)$-market can be checked in polynomial time for any $\alpha~(\ge 1)$.
If it is coNP-hard to determine whether a given solution to an instance of the $(\cU,\Gamma)$-packing problem is $\alpha$-approximate,
then it is coNP-hard to test whether a given matching is $\alpha$-stable in a $(\cU,\Gamma)$-market. 
\end{theorem}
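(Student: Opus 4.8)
The plan is to prove both directions via a single identification: the search for an $\alpha$-blocking coalition at a hospital is exactly an instance of the $(\cU,\Gamma)$-packing problem over a restricted constraint. Fix a matching $\mu$ and a hospital $h$, and set $A_h\coloneqq\{d\in D\mid h\succeq_d\mu(d)\}$, the doctors who weakly prefer $h$ to their current assignment. By Definition~\ref{def:stability}, a set $D'$ is an $\alpha$-blocking coalition for $h$ precisely when $D'\subseteq A_h$, $D'\in\cI_h$, and $u_h(D')>\alpha\cdot u_h(\mu(h))$; the first two conditions say exactly that $D'\in\cI_h|A_h$. Since $\Gamma$ is assumed closed under restriction, $\cI_h|A_h\in\Gamma$, so deciding whether $h$ admits a blocking coalition reduces to comparing the optimum of the $(\cU,\Gamma)$-packing problem $\max_{D'\in\cI_h|A_h}u_h(D')$ against the threshold $\alpha\cdot u_h(\mu(h))$.

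For the algorithmic claim, I would, for each hospital $h$, compute $A_h$ in polynomial time, invoke the polynomial-time packing oracle on $(u_h,\cI_h|A_h)$ to obtain the optimum value, and declare $\mu$ to be $\alpha$-stable if and only if this value does not exceed $\alpha\cdot u_h(\mu(h))$ for every $h$. Feasibility of $\mu$ ensures $\mu(h)\in\cI_h$, and since every $d\in\mu(h)$ trivially satisfies $h\succeq_d h$ we have $\mu(h)\subseteq A_h$, so $\mu(h)\in\cI_h|A_h$ and the packing optimum is well defined. Looping over the $m$ hospitals gives an overall polynomial-time procedure.

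For the coNP-hardness claim, I would reduce from the $\alpha$-approximation verification problem. Given a packing instance $(u,\cI)$ over a ground set $D$ and a feasible solution $X\in\cI$, I construct a single-hospital market with doctor set $D$, one hospital $h$ with $u_h\coloneqq u\in\cU$ and $\cI_h\coloneqq\cI\in\Gamma$, preferences that make $h$ acceptable to every doctor, and the matching $\mu$ given by $\mu(h)\coloneqq X$. Then $A_h=D$, hence $\cI_h|A_h=\cI$, and the blocking-coalition search coincides with the full packing problem: $\mu$ is $\alpha$-stable if and only if no $D'\in\cI$ has $u(D')>\alpha\cdot u(X)$, i.e.\ if and only if $\OPT\le\alpha\cdot u(X)$, which is exactly the assertion that $X$ is $\alpha$-approximate. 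As the construction is polynomial and this equivalence yields a many-one reduction, coNP-hardness of verification transfers to the stability-checking problem.

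The only real obstacle is the opening identification---recognizing that the admissible blocking coalitions for $h$ form exactly the restricted family $\cI_h|A_h$ and, crucially, that restriction-closure keeps this family inside $\Gamma$ so that the packing oracle applies. Both directions hinge entirely on this equivalence; afterwards the algorithmic direction is a routine per-hospital loop, and the hardness direction only requires arranging the preferences so that every doctor is available to the lone hospital, which renders the restriction vacuous and makes the blocking search identical to the original packing instance.
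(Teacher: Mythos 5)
Your proof is correct and follows essentially the same route as the paper's: both directions rest on identifying the blocking-coalition search at hospital $h$ with the packing problem over the restricted family $\cI_h|A_h$ (the paper's $D_h$), and the hardness reduction via a single-hospital market with the given solution as the matching is exactly the paper's construction. Your explicit appeal to restriction-closure of $\Gamma$ makes precise a point the paper uses only implicitly, which is a nice touch but not a different argument.
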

\begin{proof}
We first prove the former part.
Let $(D,H,{\succ_D},u_H,\allowbreak\cI_H)$ be a $(\cU,\Gamma)$-market and let $\mu\subseteq D\times H$ be a matching.
Then $\mu$ is $\alpha$-stable if and only if
\begin{align*}
   \alpha\cdot u(\mu(h))\ge
   \max\bigl\{u(D') \bigm|
   D'\in\cI_h|D_h
   \bigr\}
\end{align*}
for all $h\in H$, where $D_h\coloneqq\{d\in D\mid h\succeq_d \mu(d)\}$ and $\cI_h|D_h$ is the restriction $\cI_h$ to $D_h$.
The right-hand side value is computed in polynomial time if 
the corresponding $(\cU,\Gamma)$-packing problem is solvable in polynomial time. 
Thus, the $\alpha$-stability is checked efficiently.

Next, we give a reduction to prove the latter part.
For an instance $(u,\cI)$ of the $(\cU,\Gamma)$-packing problem,
let us consider a $(\cU,\Gamma)$-market with doctors $D$ and one hospital $H\coloneqq\{h^*\}$.
Suppose that 
$h^*\succ_{d}\emptyset$ for all $d\in D$, 
$u_{h^*}\coloneqq u$, and 
$\cI_{h^*}\coloneqq\cI$.
We then reduce the $(\cU,\Gamma)$-market to $(D,H,{\succ_D},\allowbreak{}u_H,\cI_H)$. 
Moreover, matching $\mu$ is $\alpha$-stable if and only if
$u_{h^*}(\mu(h^*))$ is an $\alpha$-approximation of $\max\left\{u(D')\mid D'\in\cI_{h^*}\right\}$. 
Thus the claim holds.
\end{proof}

The theorem enables us to access the proficiency of packing problems.  
For example, since the $(\Uadd,\Gmat{2})$-packing problem (i.e., the weighted matroid intersection problem) is solvable in polynomial time~\cite{Edmonds1970}, one can efficiently check the $\alpha$-stability of a given matching in a $(\Uadd,\Gmat{2})$-market. 
Garey and Johnson~\cite{garey1979cai} provided the coNP-completeness of several packing problems. 
They notify us that checking the stability of a matching is coNP-complete in the corresponding markets, summarized in Table~\ref{tab:summary}\,(\subref{stab:stab}).

\section{Hardness of Computing a Stable Matching}\label{sec:hardness_existence}

In this section, we discuss the negative side of computing an $\alpha$-stable matching.
Kojima \emph{et al.}~\cite{kty:jet:2018} reveals that we can efficiently find a $1$-stable matching for 
any $(\Uadd,\Gmat{1})$-market and Kawase and Iwasaki~\cite{KI2017} proves the same for $(\Ucard,\Gvec{1,0})$-market. 
In general, the existence problems we consider belong to \StP{}, since \emph{yes}-instance can be verified by checking the stability of a guessed $1$-stable matching with the NP-oracle.

We can say that it is NP-hard to find (or determine the nonexistence of) an $\alpha$-stable matching in a $(\cU,\Gamma)$-market if it is NP-hard to compute an $\alpha$-approximate solution to a $(\cU,\Gamma)$-packing instance, by applying the similar argument in Theorem~\ref{thm:check}. 
Furthermore, we can conclude that the existence problem for the hard cases are all \StP-complete.
Note that the \StP-completeness for the $(\Uadd,\Gvec{1})$-markets has shown by Hamada \textit{et al.}~\cite{HISY2017}.
We prove the hardness for the other cases by reductions from the \EAtDM{} or \EASSP{}, which are \StP-complete~\cite{berman1997,mcloughlin1984}. 
{
\begin{description}
\item[\EAtDM] We are given three disjoint sets $X_1,X_2,X_3$ of the same cardinality, and two disjoint subsets $S^\exists,S^\forall\subseteq X_1\times X_2\times X_3$. Our task is to determine whether there exists $T^\exists\subseteq S^\exists$ so that $T^\exists\cup T^\forall$ is not a matching for any $T^\forall\subseteq S^\forall$.

\item[\EASSP] We are given two disjoint sets $S^\exists,S^\forall$ with weights $a\colon S^\exists\cup S^\forall\to \mathbb{Z}_{+}$, and an integer $q$.
Our task is to determine whether there exists $T^\exists\subseteq S^\exists$ so that $\sum_{e\in T^\exists\cup T^\forall}a(e)\ne q$ for any $T^\forall\subseteq S^\forall$.
\end{description}}
We show Theorems~\ref{thm:Sigma2Pa} and \ref{thm:Sigma2Pb} by reductions from \EAtDM{} and Theorem~\ref{thm:Sigma2Pc} by a reduction from \EASSP{}. 

\begin{theorem}\label{thm:Sigma2Pa}
It is \StP-hard to decide whether a given $(\Usub,\Gcap)$-market has a $1$-stable matching.
\end{theorem}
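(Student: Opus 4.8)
The plan is to reduce from \EAtDM, exploiting the structural match between the two $\exists\forall$ patterns. The existence of a $1$-stable matching reads ``$\exists$ feasible $\mu$ such that for every hospital $h$ and every feasible $D'\subseteq D_h$ we have $u_h(D')\le u_h(\mu(h))$'', where $D_h\coloneqq\{d\mid h\succeq_d\mu(d)\}$, while a \emph{yes}-instance of \EAtDM reads ``$\exists T^\exists\subseteq S^\exists$ such that for all $T^\forall\subseteq S^\forall$ the set $T^\exists\cup T^\forall$ is not a matching''. I would therefore build a $(\Usub,\Gcap)$-market in which a candidate matching $\mu$ encodes the existential choice $T^\exists$, a potential blocking coalition at one distinguished \emph{verifier} hospital $h^*$ encodes the universal choice $T^\forall$, and the event ``$h^*$ admits a blocking coalition'' is made equivalent to ``$T^\exists\cup T^\forall$ is a conflict-free (perfect) matching for some $T^\forall$''. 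Under this correspondence $\mu$ is $1$-stable precisely when no $T^\forall$ completes $T^\exists$ to a matching, which is exactly the \EAtDM condition, so the market has a $1$-stable matching iff the instance is a \emph{yes}-instance. Membership in \StP{} is already argued in the text, so only hardness is needed.

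For the construction I would give $h^*$ a weighted-coverage submodular utility of the same flavor as in Example~\ref{ex:submo}: introduce one coverage element $g_x$ for each $x\in X_1\cup X_2\cup X_3$, and for each triple $s=(x_1,x_2,x_3)$ a doctor $d_s$ whose covered set is $\{g_{x_1},g_{x_2},g_{x_3}\}$, so that $u_{h^*}$ of a set of triple-doctors is the total weight of the elements they jointly cover. Setting the capacity of $h^*$ to $\ell\coloneqq|X_1|$ forces any coalition to use at most $\ell$ triples, whence it can cover all $3\ell$ real elements only by using $\ell$ pairwise conflict-free triples, i.e.\ a perfect $3$-dimensional matching. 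Around $h^*$ I would attach \emph{selector} gadgets (auxiliary rank-one capacity hospitals, dummy doctors, and doctor preferences) whose role is to let $\mu$ choose an arbitrary matching $T^\exists\subseteq S^\exists$ and to arrange that the pool $D_{h^*}$ consists exactly of the committed doctors $\{d_s\mid s\in T^\exists\}$ together with all universal doctors $\{d_s\mid s\in S^\forall\}$; these gadgets are designed to be internally stable in the relevant matchings, localizing all stability reasoning at $h^*$. To prevent a profitable coalition from cheating by discarding committed triples, I would additionally give each $d_s$ with $s\in S^\exists$ a private, heavily weighted coverage element covered only by $d_s$: then any set that beats the baseline must retain all of $T^\exists$, and the remaining $\ell-|T^\exists|$ slots must be filled by $S^\forall$-triples that, to attain full real coverage, are forced to be conflict-free with $T^\exists$, thereby realizing $T^\exists\cup T^\forall$ as a perfect matching.

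The calibration step is to fix the weights and the baseline $u_{h^*}(\mu(h^*))$ so that a set $D'\subseteq D_{h^*}$ with $|D'|\le\ell$ satisfies $u_{h^*}(D')>u_{h^*}(\mu(h^*))$ iff its triples cover the entire real element set, i.e.\ iff $T^\exists\cup T^\forall$ is a perfect matching; making the baseline correspond to ``one real element short of perfect'' renders ``strictly beating the baseline'' equivalent to ``full conflict-free coverage''. Both directions are then routine: if the instance is a \emph{yes}-instance, the matching realizing $T^\exists$ is $1$-stable because every admissible coalition falls at least one element short; conversely, a $1$-stable matching exhibits a $T^\exists$ that no $T^\forall$ completes, a \emph{yes}-certificate. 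I expect the main obstacle to be precisely this gadget design, where three requirements must hold simultaneously: (i) ``beats the baseline'' must coincide exactly with ``conflict-free full coverage'', (ii) every improving coalition must retain all committed triples (handled by the private elements), and (iii) the selector gadgets must be internally stable \emph{and} must forbid the matching from committing to a self-conflicting $T^\exists$ (which would otherwise be spuriously stable). Balancing the capacity bound $\ell$ against the coverage threshold so that these conditions hold at once is the delicate part of the argument.
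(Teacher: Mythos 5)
There is a genuine gap, and it sits exactly where you predicted (\enquote{the delicate part}), but it is worse than a calibration detail: two of your three requirements cannot be met by the mechanisms you propose. First, nothing in your construction forces the baseline at $h^*$ to be strictly below full coverage in \emph{every} candidate stable matching. Note that in any \emph{no}-instance of \EAtDM{}, taking $T^\exists=\emptyset$ shows that $S^\forall$ alone contains a perfect matching. So consider the matching that parks every existential doctor at its selector hospital (making all of them unavailable, hence all private heavy elements unavailable) and assigns $h^*$ an all-universal perfect matching: its value $3\ell$ is the maximum achievable over the remaining pool, no coalition strictly beats it, and your selector gadgets are \emph{internally stable by design} --- so this matching is $1$-stable even though the instance is a \emph{no}-instance, and the reduction breaks. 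The paper's proof closes exactly this hole with a gadget that is deliberately internally \emph{unstable}: the sub-market of Example~\ref{ex:submo} has no $1$-stable matching on its own, so any $1$-stable matching must place $d_1$ at $h^*$; this consumes one of the $|X_1|$ capacity slots (capping coverage at $3|X_1|-3$) while contributing a bonus of $2$. Your design philosophy of localizing all instability at $h^*$ rules out this trick, and you propose no substitute for it.

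Second, the calibration you describe (dummies padding the baseline to \enquote{one real element short of perfect,} privates forcing retention of $T^\exists$) provably cannot achieve \enquote{blocks iff full coverage.} With $m=\ell-|T^\exists|$ dummy slots each worth $v$, a coalition that keeps $T^\exists$, swaps $k$ dummies for $k$ universal triples that are pairwise disjoint and disjoint from $T^\exists$, gains $3k-kv$. For a full-coverage coalition ($k=m$) to block you need $v<3$; but then already $k=1$ blocks whenever a single disjoint universal triple exists, which is far weaker than full coverage, so spurious blocking destroys the \emph{yes}-direction. If instead $v\ge 3$, no coalition ever blocks, destroying the \emph{no}-direction. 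There is no threshold value of $v$ separating the two cases, because both trades have the same per-slot gain. The paper escapes this by not padding at all: $\mu^*(h^*)$ contains the (near-)optimal universal completion itself minus one triple $e^*$, plus $d_1$, and the argument hinges on the fact that the constant $2$ lies strictly between the marginal coverage of some triple in a non-perfect cover (at most $2$) and the marginal coverage of every triple in a perfect matching (exactly $3$). That threshold-of-two argument is the heart of the proof and is absent from your plan. (One genuine merit of your proposal: the private elements address the subtlety that coalitions might discard committed triples, which the paper instead handles through the choice of certificate; but this does not rescue the calibration.)
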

\begin{theorem}\label{thm:Sigma2Pb}
It is \StP-hard to decide whether a given $(\Ucard,\Gmat{3})$-market has a $1$-stable matching.
\end{theorem}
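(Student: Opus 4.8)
The plan is to reduce from \EAtDM{}, exploiting that a $3$-matroid intersection encodes exactly the ``is a partial $3$-dimensional matching'' constraint. Given an \EAtDM{} instance $(X_1,X_2,X_3,S^\exists,S^\forall)$ with $|X_1|=|X_2|=|X_3|=N$, I would introduce one doctor $d_t$ per triple $t\in S^\exists\cup S^\forall$ together with a distinguished hospital $h^*$ whose constraint $\cI_{h^*}\in\Gmat{3}$ is the intersection of the three partition matroids induced by the coordinate sets $X_1,X_2,X_3$; then a set of doctors is feasible for $h^*$ iff the corresponding triples are pairwise disjoint, and a feasible set has size $N$ iff it is a perfect matching. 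With cardinality utility, the stability characterization behind Theorem~\ref{thm:check} says that $\mu(h^*)$ must be a maximum-size feasible set among the doctors weakly preferring $h^*$.

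Next I would encode the existential choice. For each $e\in S^\exists$ I would attach a small \emph{selection gadget} built only from cardinality capacity-one auxiliary hospitals and competing auxiliary doctors, designed so that a candidate matching may freely commit any \emph{partial matching} $T^\exists\subseteq S^\exists$ to $h^*$, while each unselected $d_e$ is drawn to a strictly preferred private hospital and thereby leaves $D_{h^*}$. Feasibility of $\mu(h^*)$ forces $T^\exists$ to be a partial matching, which conveniently rules out the degenerate commitments. The doctors $d_f$ for $f\in S^\forall$ would find only $h^*$ acceptable, so they remain in $D_{h^*}$ as the ``universal'' side, available to form a completion $T^\exists\cup T^\forall$.

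The intended correspondence is that $\mu$ is stable iff the committed $T^\exists$ admits \emph{no} completion to a perfect matching by $S^\forall$, so that a stable matching exists iff \EAtDM{} is a yes-instance. The delicate direction is ``completable $\Rightarrow$ unstable.'' Here the key difficulty surfaces: cardinality utilities make a single hospital simply grab a maximum feasible set, so $h^*$ in isolation never witnesses the universal challenge — it would merely absorb the completing triples $T^\forall$ and remain stable, which makes \emph{more} completable instances \emph{easier} to stabilize, the opposite of what is required. The nonexistence of a stable matching must therefore be produced not by a blocking coalition at $h^*$ alone but by a genuine \emph{cyclic} conflict among doctors' preferences, as in Example~\ref{ex:nonexistence}. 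I would embed such a crossed conflict and gate it on the $3$-matroid structure so that the cycle is forced to be active — leaving no stable matching — precisely when $h^*$ is driven to a perfect matching (equivalently, when the committed $T^\exists$ is completable), and is inert otherwise.

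The main obstacle is exactly this gating, which splits into two parts. First, the detection must be exact: the cyclic instability should fire on \emph{every} perfect completion of $T^\exists$ and only on completions — not on spurious intermediate-size feasible sets, and not on perfect matchings that use only a proper subset of the committed $T^\exists$ (one must force each committed $d_e$ to be mandatory and the witnessing matching to have full size $N$, e.g.\ via forcing elements activated by the selection gadget). Second, the committing side must not be able to sidestep the challenge by absorbing a maximum matching into $\mu(h^*)$; this is what the Example~\ref{ex:nonexistence}-style conflict is for. Establishing both implications against all responses of the universal player — soundness and completeness of the $\exists\forall$ alternation — is where the real work lies. Membership in \StP{} is immediate, since a guessed stable matching is verifiable with an NP-oracle.
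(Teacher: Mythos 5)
Your skeleton coincides with the paper's own proof: it too reduces from \EAtDM{}, introduces one doctor per triple, encodes ``pairwise disjoint triples'' at $h^*$ via the three coordinate partition matroids, lets the universal doctors find only $h^*$ acceptable, and implements the existential commitment with exactly the capacity-one gadget you describe (a private hospital $h^e$ that $d^e$ prefers to $h^*$, plus a rival doctor $\bar{d}^e$, so that under cardinality utilities either occupant of $h^e$ is stable). You have also correctly diagnosed the crux: with cardinality utilities, $h^*$ would simply absorb any completing triples $T^\forall$ and remain stable, so an Example~\ref{ex:nonexistence}-style conflict must be combined with the $3$-matroid structure.

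However, the proposal stops exactly where the proof's real content begins: the ``gating'' is stated as a goal, never constructed, and you explicitly defer it (``where the real work lies''). That missing device is the whole theorem. The paper's mechanism is also not the one you sketch --- the cycle does not fire ``precisely when $h^*$ is driven to a perfect matching''; it is used unconditionally. Doctor $d_1$ belongs to the Example~\ref{ex:nonexistence} gadget and has $h^*$ as top choice, so \emph{every} $1$-stable matching must place $d_1$ in $\mu(h^*)$. The key addition is a fresh witness doctor $d^*$ whose only acceptable hospital is $h^*$, together with two extra conditions folded into the matroids defining $\cI_{h^*}$: $|T\setminus\{d^*\}|\le |X_1|$ (a truncation, in the second matroid) and $|T\cap\{d_1,d^*\}|\le 1$ (an extra partition class, in the third). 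Hence in any stable matching $d_1\in\mu(h^*)$ forces $d^*\notin\mu(h^*)$ and $|\mu(h^*)|\le |X_1|$, so $h^*$ holds at most $|X_1|-1$ triples and can never absorb a perfect matching; meanwhile a perfect matching among the available triples together with $d^*$ is feasible, has size $|X_1|+1$, and therefore blocks. So nonexistence in the \emph{no}-case is produced by a blocking coalition at $h^*$ after all --- contrary to your expectation --- but one that exists only because the cycle pins $d_1$ there and the rank caps keep $\mu(h^*)$ strictly smaller than the witness coalition. Without this $d_1$/$d^*$ mutual-exclusion-plus-truncation gadget (or an equivalent), your construction has no way to let the universal side defeat a committed $T^\exists$, and neither direction of the claimed equivalence can be proved. (Your other worry --- spurious perfect matchings that use only a proper subset of the committed triples --- is a genuine subtlety, but the paper addresses it through the certificate semantics and without-loss-of-generality normalizations of \EAtDM{}, not through a ``mandatory commitment'' gadget.)
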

\begin{theorem}\label{thm:Sigma2Pc}
It is \StP-hard to decide whether a given $(\Ucard,\Gvec{2})$-market has a $1$-stable matching.
\end{theorem}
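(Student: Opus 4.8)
The plan is to reduce from \EASSP{}, reusing the same $\exists\forall$ matching skeleton that drives Theorems~\ref{thm:Sigma2Pa} and~\ref{thm:Sigma2Pb}: the free choices of the matching encode the existential guess $T^\exists\subseteq S^\exists$, and the absence of a blocking coalition at a designated hospital encodes the universal statement ``$\sum_{e\in T^\exists\cup T^\forall}a(e)\ne q$ for every $T^\forall\subseteq S^\forall$.'' The only genuinely new, constraint-specific task is to encode the predicate ``the weight of the chosen doctors equals the target'' using a cardinality utility together with a two-dimensional knapsack. As preprocessing I may assume $\sum_{e\in S^\exists}a(e)\le q$, since otherwise $T^\exists=S^\exists$ already witnesses a yes-instance and the reduction can output a fixed trivial instance.

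The exact-sum gadget uses the two dimensions as complementary bounds. At the designated hospital, give each relevant doctor primary weight equal to its (transformed) element weight with first capacity equal to the target, so the first dimension enforces $\sum\le\text{target}$; give the same doctor secondary weight $A-(\text{its primary weight})$ with $A$ larger than every primary weight, and set the second capacity to $kA-\text{target}$. Then, at coalition size exactly $k$, the second dimension forces $\sum\ge\text{target}$, so a size-$k$ feasible coalition exists if and only if some available subset meets the target exactly. Zero-primary-weight fillers (secondary weight $A$) let a coalition reach size $k$, while any coalition of size $k+1$ would need $\sum\ge A+\text{target}$, which is impossible; hence the maximum feasible size is precisely $k$, and ``a critical (size-$k$) coalition is feasible'' becomes equivalent to ``the target is hit exactly.''

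A subtlety is that a cardinality utility and a knapsack see only aggregate weights, so this test a priori only detects that \emph{some} subset of the available doctors meets the target, which would let the adversary discard committed $\exists$-doctors. To pin the committed contribution to $\sum_{e\in T^\exists}a(e)$, I would offset every $\exists$-selector doctor \emph{and} its mirror by a large constant $M>\sum_e a(e)$ and set the target to $q'=q+|S^\exists|\,M$; then any weight-$q'$ coalition must use the available doctor from each of the $|S^\exists|$ selector pairs (a high-order-digit argument, since missing one loses $\ge M$), contributing exactly $\sum_{e\in T^\exists}a(e)$ from the selected reals and leaving room $q-\sum_{e\in T^\exists}a(e)$, which forces the remaining part to be a genuine $T^\forall\subseteq S^\forall$ with $\sum_{e\in T^\forall}a(e)=q-\sum_{e\in T^\exists}a(e)$. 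I would realize each binary selector choice as two \emph{locally stable} alternatives by routing $d_e$ and its mirror through a capacity-one hospital whose cardinality utility makes it indifferent between them, so that exactly one of the pair is left available at the designated hospital and both settings of the bit are individually stable.

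The hard part will be turning this exact-sum test into \emph{genuine} instability. A hospital with a cardinality utility can always greedily grab a maximum feasible coalition and be satisfied, so nonexistence of a stable matching cannot arise from one hospital in isolation; it must come from a competition/cycle in the spirit of Example~\ref{ex:nonexistence} whose ``bad'' arm is exactly the critical poaching coalition, so that assembling the size-$k$ coalition at the designated hospital necessarily destabilizes a competitor. Wiring this competition so that, for every $T^\exists$, the induced configuration is stable precisely when no completing $T^\forall$ exists---while ruling out all unintended blocking coalitions and keeping every existential bit independently realizable---is the delicate core, and is exactly where I would adapt the skeleton already built for Theorems~\ref{thm:Sigma2Pa} and~\ref{thm:Sigma2Pb}. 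Finally, I would append dummy zero-weight selector pairs and rescale so that every normalized weight lies in $[0,1-\epsilon]$, placing all constraints in $\Gvec{2,\epsilon}$ and completing the \StP-hardness.
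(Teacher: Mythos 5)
Your proposal follows essentially the same route as the paper's proof in Appendix~\ref{apx:hardness_existence}: a reduction from \EASSP{}, capacity-one selector hospitals $h^e$ realizing the existential bits, filler (``star'') doctors $d^*_i$, the Example~\ref{ex:nonexistence} submarket forcing a special doctor $d_1$ into the designated hospital $h^*$, and the identical two-dimensional exact-sum gadget --- the paper's weights $w(d^e,1)=(M+a(e))/(Mn+q)$ and $w(d^e,2)=(M-a(e))/(Mn-q)$ are exactly your complementary dimensions, normalized so that both capacities equal $1$. The part you defer as the ``delicate core'' is instantiated in the paper by giving $d_1$ weight $2M/(Mn+q)$ in dimension $1$ and $0$ in dimension $2$: since $d_1$ must sit at $h^*$ in any $1$-stable matching (else the Example~\ref{ex:nonexistence} submarket blocks), every feasible $\mu$ has $|\mu(h^*)|\le n-1$, whereas a feasible coalition of size $n$ exists precisely when some available set of element doctors sums exactly to $q$, and under the cardinality utility such a coalition always blocks. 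Given your skeleton, filling in that wiring is routine.

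The one genuine difference is your ``pinning'' step, and there your caution is not redundant --- it repairs a real gap in the paper's own yes-instance direction. The paper does not pin the existential contribution: its mirrors $\bar d^e$ list only $h^e$ and carry no knapsack weight, so a blocking coalition at $h^*$ may \emph{discard} doctors of $\hat T^\exists$, while the certificate only excludes sets of the form $\hat T^\exists\cup T^\forall$. Concretely, take $S^\exists=\{e_1\}$ with $a(e_1)=1$, $S^\forall=\{f_1,f_2,f_3\}$ with $a(f_i)=3$, and $q=3$; this satisfies the paper's normalization assumptions and is a yes-instance with certificate $\{e_1\}$, yet in the reduced market ($M=10$, $n=4$) the coalition $\{d^{f_1},d^*_j,d^*_k,d^*_l\}$ has dimension-$1$ weight $(13+30)/43=1$ and dimension-$2$ weight $(7+30)/37=1$, hence is feasible of size $4$ and consists of doctors that are available under every feasible matching; since any matching either leaves $d_1$ outside $h^*$ (so the Example~\ref{ex:nonexistence} gadget blocks) or satisfies $|\mu(h^*)|\le 3$ (so this size-$4$ coalition blocks), the reduced market has no $1$-stable matching at all. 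Your fix --- letting both members of each selector pair carry the high-order offset $M$ and propose to $h^*$, with the target shifted to $q+|S^\exists|M$ so that any exact-sum coalition must contain the available representative of every pair --- makes the set of usable existential doctors exactly $\hat T^\exists$ and thereby restores the equivalence; the remaining items in your outline (local stability of both bit settings at the capacity-one hospitals, rescaling all weights into $\Gvec{2,\epsilon}$) go through as you describe. In short: same approach as the paper, still a sketch at the competition-wiring step, but on the one point where the two constructions differ, yours is the sound one.
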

\begin{proof}
Here we only provide a proof for Theorem~\ref{thm:Sigma2Pa}.
Proofs for Theorems~\ref{thm:Sigma2Pb} and~\ref{thm:Sigma2Pc} can be obtained in similar ways (formal proofs are shown in Appendix~\ref{apx:hardness_existence}).

We give a reduction from \EAtDM{}.
Suppose that disjoint sets $S^\exists,S^\forall\subseteq X_1\times X_2\times X_3$ are given as an instance of \EAtDM.

We construct a $(\Usub,\Gcap)$-market that has a $1$-stable matching if and only if the given instance is a \emph{yes}-instance.
Consider a market $(D,H,\succ_D,u_H,\cI_H)$ with 
$D\coloneqq\{d_1,d_2,d_3,d_4\}\cup \{d^e\}_{e\in S^\exists\cup S^\forall}$ and 
$H\coloneqq \{h^*,h_1,h_2\}\cup \{h^e\}_{e\in S^\exists}$.
The doctors' preferences over the acceptable hospitals are given as:
\begin{multicols}{3}
\begin{itemize}
\item $d^e\colon h^e,h^*$ $(e\in S^\exists)$,
\item $d^e\colon h^*$ $(e\in S^\forall)$,
\item $d_1\colon h^*,h_1$,
\item $d_2\colon h_1$,
\item $d_3\colon h_2,h_1$,
\item $d_4\colon h_1,h_2$.
\end{itemize}
\end{multicols}
\noindent Here, and henceforth, preference lists are ordered from left to right in decreasing order of preference.
The feasibility constraint is the capacity constraint of 
rank 1 for $h_2$ and $h^e~(e\in S^\forall)$,
rank 2 for $h_1$, and 
rank $|X_1|~(=|X_2|=|X_3|)$ for $h^*$.
Suppose that $u_{h_1}$ and $u_{h_2}$ are the same as Example~\ref{ex:submo} (the utilities of unmatchable doctors are considered to be zero),
and $u_{h^e}$ is identically zero $(\forall e\in S^\exists)$.
In addition, for $X\subseteq D$, we define
\[u_{h^*}(X)\coloneqq \left|\textstyle\bigcup_{d^{(x_1,x_2,x_3)}\in X\setminus\{d_1,d_2,d_3,d_4\}}\{x_1,x_2,x_3\}\right|+2|X\cap\{d_1\}|.\]
This is a weighted-coverage function and hence submodular.

Consider the case when the instance is a \emph{yes}-instance.
We show that there exists a $1$-stable matching in this case.
Let $\hat{T}^\exists\subseteq S^\exists$ be a certificate of the instance. 
Without loss of generality, we may assume that $|\hat{T}^\exists|\le |X_1|-1$, $|\hat{T}^\exists|+|S^\forall|\ge |X_1|$, and $\hat{T}^\exists$ is a matching. 
Let 
\[\hat{T}^\forall\in\argmax\Bigl\{u_{h^*}\bigl(\{d^e\}_{e\in \hat{T}^\exists}\cup \{d^e\}_{e\in T^\forall}\bigr)\,\Bigm|\, |\hat{T}^\exists|+|T^\forall|=|X_1|,\ T^\forall\subseteq S^\forall\Bigr\}.\]
Then, $\hat{T}^\exists\cup \hat{T}^\forall$ is not a matching, and hence there exists $e^*\in\hat{T}^\forall$ such that 
\[u_{h^*}\bigl(\{d^e\}_{e\in \hat{T}^\exists}\cup \{d^e\}_{e\in \hat{T}^\forall}\bigr)
\le u_{h^*}\bigl(\{d^e\}_{e\in \hat{T}^\exists}\cup \{d^e\}_{e\in \hat{T}^\forall\setminus\{e^*\}}\bigr)+2.\]
Thus, the matching
\begin{align*}
\mu^*=
&\{(d^e,h^*)\}_{e\in\hat{T}^\exists}
\cup \{(d^e,h^*)\}_{e\in\hat{T}^\forall\setminus\{e^*\}}
\cup \{(d^e,h^e)\}_{e\in S^\exists\setminus\hat{T}^\exists}
\cup\{(d_1,h^*),(d_2,h_1),(d_3,h_2),(d_4,h_1)\}
\end{align*}
is $1$-stable.

Conversely, consider the case when the instance is a \emph{no}-instance.
We show that there exists no $1$-stable matching in this case.
Suppose to the contrary that $\mu$ is a $1$-stable matching.
Then, $\mu$ must contain $(d_1,h^*)$
since the submarket induced by $\{d_1,d_2,d_3,d_4\}$ and $\{h_1,h_2\}$ 
is equivalent to Example~\ref{ex:submo} (which has no $1$-stable matching).
Hence, $u_{h^*}(\mu(h^*))\le 3|X_1|-1$.
Let $\tilde{T}^\exists=\{e\in S^\exists\mid d^e\in \mu(h^*)\}$.
Since the instance is a \emph{no}-instance, there exists $\tilde{T}^\forall\subseteq S^\forall$ such that 
$|\tilde{T}^\exists|+|\tilde{T}^\forall|=|X_1|$ and $\tilde{T}^\exists\cup \tilde{T}^\forall$ is a matching.
Thus, we have $u_{h^*}\bigl(\{d^e\}_{e\in \tilde{T}^\exists}\cup\{d^e\}_{e\in \tilde{T}^\forall}\bigr)=3|X_1|$,
which implies that $\tilde{T}^\exists\cup\tilde{T}^\forall$ is a $1$-blocking coalition for $h^*$.
\end{proof}

Now the remaining cases to be treated are $(\Ucard,\Gmat{2})$- and $(\Uadd,\Gmat{2})$-markets.
For a $(\Uadd,\Gmat{2})$-market, although the $\alpha$-stability (especially $1$-stability) of a given matching can be checked in polynomial time by Theorem~\ref{thm:check}, the existence problem becomes NP-complete, even if utilities are restricted to cardinality. 
We prove the NP-hardness by a reduction from \DMP{}, which is NP-complete~\cite{frieze1983}.

\begin{description}
\item[\DMP] We are given two bipartite graphs, $(S,T;A_1)$ and $(S,T;A_2)$ with $|S|=|T|$, and our task is to determine whether perfect matchings $M_1\subseteq A_1$ and $M_2\subseteq A_2$ exist such that $M_1\cap M_2=\emptyset$.
\end{description}

\begin{theorem}\label{thm:2hard}
  It is NP-complete to decide whether a given $(\Ucard,\Gmat{2})$-market has a $1$-stable matching or not.
\end{theorem}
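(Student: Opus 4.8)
The statement has two halves: membership in NP and NP-hardness. Membership is immediate from the machinery already built. The $(\Ucard,\Gmat{2})$-packing problem is exactly weighted matroid intersection, which is solvable in polynomial time, so by Theorem~\ref{thm:check} the $1$-stability of any given matching in a $(\Ucard,\Gmat{2})$-market can be verified in polynomial time. A nondeterministic machine therefore guesses a feasible matching and checks its $1$-stability, placing the existence problem in NP. All the real work is in the hardness half, which I would establish by a polynomial reduction from \DMP.

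For the reduction, given graphs $(S,T;A_1)$ and $(S,T;A_2)$ with $|S|=|T|=n$, the backbone of the market is: one doctor $d_e$ per edge $e\in A_1\cup A_2$, together with two ``matching hospitals'' $g_1,g_2$. Hospital $g_1$ finds exactly the doctors $\{d_e:e\in A_1\}$ acceptable, and its feasibility family consists of the sets of such doctors whose edges form a matching in $(S,T;A_1)$; this is an intersection of the two partition matroids induced by the $S$- and $T$-endpoints, hence lies in $\Gmat{2}$, and with the cardinality utility $g_1$ simply wants a maximum matching of $A_1$. Hospital $g_2$ is defined symmetrically for $A_2$. The key structural observation is that, because each doctor is assigned to at most one hospital, the edge sets $M_1,M_2$ realized at $g_1,g_2$ in any feasible matching are automatically disjoint; consequently both hospitals are simultaneously saturated (each of size $n$) if and only if the instance admits disjoint perfect matchings. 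The reduction thus hinges on forcing $1$-stability to coincide with joint saturation of $g_1$ and $g_2$.

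The naive backbone alone does not do this: if, say, $(S,T;A_1)$ has a perfect matching, one can saturate $g_1$ and let $g_2$ take a maximum matching among the leftover edges, and---if the leftover doctors prefer $g_1$---no blocking coalition arises even when the instance is a no-instance (a graph with a unique perfect matching already exhibits this). To rule out such lopsided stable matchings I would attach a coupling gadget modeled on Example~\ref{ex:nonexistence}, the cyclic $(\Ucard,\Gmat{2})$ instance that admits no $1$-stable matching, and wire it to $g_1$ and $g_2$ through shared doctors and a carefully oriented preference cycle. The gadget would be designed so that whenever some hospital is left unsaturated, a doctor the gadget needs becomes free and re-ignites the cyclic instability, producing a blocking coalition; conversely, when both $g_1$ and $g_2$ are saturated, the gadget's doctors are all absorbed so that the gadget itself is stable. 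With this coupling in place, from disjoint perfect matchings $M_1,M_2$ I would explicitly assemble a saturating matching and verify directly that no hospital---matching hospital or gadget hospital---has a larger feasible recruiting set among the doctors that weakly prefer it, so it is $1$-stable; and in the no-instance case at least one matching hospital is unsaturated in every feasible matching, which by the gadget's design always yields a blocking coalition.

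The main obstacle is precisely the design of this coupling gadget and the orientation of the doctors' preferences. The delicate point is the no-instance direction: I must guarantee that an unsaturated matching hospital can always recruit a strictly larger feasible coalition from doctors who weakly prefer it, while respecting the $2$-matroid feasibility and the competing pull of the other hospital. Controlling exactly which doctors weakly prefer which hospital---so that every ``missing'' edge at an unsaturated hospital corresponds to an augmenting set of recruitable doctors, yet a jointly saturated configuration admits no such augmenting set anywhere---is the crux of the proof, and it is what the cyclic gadget of Example~\ref{ex:nonexistence} is brought in to enforce.
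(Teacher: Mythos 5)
Your NP-membership argument is correct and matches the paper's. The hardness half, however, has a genuine gap: you correctly identify that the naive backbone fails and that copies of Example~\ref{ex:nonexistence} must be used to punish ``bad'' matchings, but the coupling gadget you defer to---and explicitly acknowledge as ``the crux of the proof''---is never constructed, and the architecture you propose for it (a single gadget wired to $g_1,g_2$ whose instability is triggered by \emph{unsaturation} of a matching hospital) cannot work as stated. With cardinality utilities, an unsaturated matching hospital need not admit any blocking coalition at all: if $(S,T;A_1)$ has no perfect matching and $g_1$ holds a maximum matching of $A_1$, then $g_1$ is unsaturated yet unblockable, so ``unsaturated $\Rightarrow$ blocked'' is unenforceable, and your claimed equivalence between $1$-stability and joint saturation breaks down in precisely the no-instance direction you need.

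The paper's construction resolves this with two ingredients you are missing. First, instead of one coupling gadget, it attaches a \emph{private} copy of Example~\ref{ex:nonexistence} to every edge doctor $d_1^k$: three extra doctors $d_2^k,d_3^k,d_4^k$ and two hospitals $h_1^k,h_2^k$, with $d_1^k$ ranking $h_1,h_2,h_3$ above $h_1^k,h_2^k$. The trigger is then not saturation of $h_1$ or $h_2$ but whether the individual doctor $d_1^k$ is absorbed by a ``global'' hospital: if it is, the residual gadget $\{d_2^k,d_3^k,d_4^k\}$ admits a stable assignment; if it is not, the four doctors and $h_1^k,h_2^k$ reconstitute Example~\ref{ex:nonexistence} exactly, and no matching can be $1$-stable. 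Second, to make the yes-direction work, the paper adds an overflow hospital $h_3$ with capacity $|A_1\cup A_2|-2|S|$; without it, the edge doctors outside $M_1\cup M_2$ would fall back into their gadgets and destroy stability even when disjoint perfect matchings exist. The counting then closes the argument: the global hospitals can jointly absorb all $|A_1\cup A_2|$ edge doctors if and only if $h_1$ receives a perfect matching of $A_1$ and $h_2$ a perfect matching of $A_2$ (disjointness being automatic, as you observed), i.e., if and only if the \DMP{} instance is a yes-instance. Your proposal contains the right building blocks but stops exactly where the proof's real content begins.
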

\begin{proof}
The problem is clearly in NP, since one can efficiently check the $1$-stability of a given matching in a $(\Ucard,\Gmat{2})$-market.

We give a reduction from \DMP{} to prove NP-hardness.
Let $(S,T;A_1)$ and $(S,T;A_2)$ be the bipartite graphs of a given disjoint matching instance and
let $A_1\cup A_2=\{a_1,\dots,a_\ell\}$.
Without loss of generality, we assume that $|A_1\cup A_2|\ge 2|S|$.

We construct a market that has a $1$-stable matching if and only if the given instance has disjoint perfect matchings.
Consider a market $(D,H,\succ_D,u_H,\cI_H)$ with
$4\ell$ doctors $D\coloneqq\bigcup_{k=1}^\ell \{d^k_1,d^k_2,d^k_3,d^k_4\}$ and
$2\ell+3$ hospitals $H\coloneqq\{h_1,h_2,h_3\}\cup\bigcup_{k=1}^{\ell}\{h^k_1,h^k_2\}$.
The doctors' preferences over the acceptable hospitals are given as:
\begin{multicols}{2}
\begin{itemize}
\item $d^{k}_1\colon h_1, h_2, h_3, h_{k}^1, h_{k}^2$ \ $(k\in [\ell])$,
\item $d^{k}_2\colon h^{k}_1, h^k_{2}$ \ $(k\in [\ell])$,
\item $d^{k}_3\colon h^{k}_2, h^k_{1}$ \ $(k\in [\ell])$,
\item $d^{k}_4\colon h^{k}_2, h^k_{1}$ \ $(k\in [\ell])$.
\end{itemize}
\end{multicols}
Suppose that each hospital has the cardinality utility.
We equate each doctor $d_1^k\in D$ with edge $a_k\in A_1\cup A_2$.
Then, the feasibility constraint for each hospital is defined:\\[4mm]
\begin{minipage}[t]{.6\linewidth}
\begin{itemize}
\item $\cI_{h_1}\coloneqq\{D'\subseteq A_1\mid \text{$D'$ is a matching in $(S,T;A_1)$}\}$,
\item $\cI_{h_2}\coloneqq\{D'\subseteq A_2\mid \text{$D'$ is a matching in $(S,T;A_2)$}\}$,
\item $\cI_{h_3}\coloneqq\{D'\subseteq A_1\cup A_2\mid |D'|\le |A_1\cup A_2|-2|S|\}$,
\end{itemize}
\end{minipage}%
\begin{minipage}[t]{.4\linewidth}
\begin{itemize}
\item $\cI_{h^k_1}\coloneqq 2^{\{d^k_1,d^k_3\}}\cup 2^{\{d^k_2,d^k_4\}}$ $(k\in [\ell])$,
\item $\cI_{h^k_2}\coloneqq 2^{\{d^k_1,d^k_4\}}\cup 2^{\{d^k_2,d^k_3\}}$ $(k\in [\ell])$.
\end{itemize}
\end{minipage}\\[4mm]
Note that each feasible family can be represented as an intersection of two matroids.

Consider the case when the instance has disjoint perfect matchings.
Let $M_1\subseteq A_1$ and $M_2\subseteq A_2$ be the matchings.
Then
\begin{align*}
  \mu=&\{(d,h_1)\}_{d\in M_1}
       \cup \{(d,h_2)\}_{d\in M_2}
       \cup \{(d,h_3)\}_{d\in (A_1\cup A_2)\setminus (M_1\cup M_2)}
       \cup \textstyle\bigcup_{k=1}^\ell \{(d_2^k,h_1^k),(d_3^k,h_2^k),(d_4^k,h_1^k)\}
\end{align*}
is a $1$-stable matching.

Conversely, consider the case when the instance has no disjoint perfect matchings.
Let $\mu$ be a feasible matching.
Then, there exists a doctor $d_1^k$ such that $\mu(d_1^k)\not\in\{h_1,h_2,h_3\}$.
In this case, $\mu$ is not $1$-stable 
since doctors $\{d_1^k,d_2^k,d_3^k,d_4^k\}$ and hospitals $\{h_1^k,h_2^k\}$ form the same market as in Example~\ref{ex:nonexistence}.
\end{proof}

\section{Approximability of Stable Matchings}
To deal with the nonexistence or the hardness of stable matchings, we focus on an approximately stable matching where stability may be violated to some extent. This section pays an attention to an online version of packing problems, i.e., \emph{online packing problems} (with cancellation) and incorporates the proficiency into a variant of generalized deferred acceptance (GDA) algorithm~\cite{Hatfield:AER:2005} in such a manner that choice functions of hospitals are replaced with an online packing algorithm. We establish a framework so that the bounds of the algorithms become consistent with how much stability is violated. Note that Kawase and Iwasaki~\cite{KI2018} apply a similar idea for budget constraints, that is, $1$-dimensional knapsack with $\epsilon$-slack constraints. 

In what follows, we consider algorithms that take a market as input and yield an approximately stable matching as output. 
An \emph{algorithm} is called \emph{$\alpha$-stable} if it always produces an $\alpha$-stable matching for a certain $\alpha$. 

\subsection{Online Packing Problem}
Let us briefly introduce an online packing problem, 
which is a generalization of several online problems such as an online removable knapsack problem~\cite{iwama2002rok}.
Its instance consists of 
a set of elements $D=\{d_1,\dots,d_n\}$, 
a utility function $u\colon 2^D\to \mathbb{R}_+$, and 
a feasibility family $\cI\subseteq 2^D$. 
We assume that $u$ is monotone and $\cI$ is an independence family. 
Elements in $D$ are given to an online algorithm $\ALG$ one by one in an unknown order. 
When an element is presented, the algorithm must accept or reject it immediately 
without knowledge about the ordering of future elements.
Although accepted elements can be canceled, the elements that are once rejected (or canceled) can never be recovered.
The set of selected elements must be feasible in each round.
Suppose that elements are given according to order $\sigma$, which is a bijection from $[n]$ to $D$.
We denote by $\ALG(\sigma(1),\dots,\sigma(i))$ the set of selected elements at the end of the $i$th round, in which $\sigma(i)\in D$ arrives.
We denote it as $\ALG(\sigma,i)$ for brevity.
Then we have $\ALG(\sigma,0)=\emptyset$, and 
$\ALG(\sigma,i-1)\cup\{\sigma(i)\}\supseteq \ALG(\sigma,i)\in\cI$
holds for every $i\in[n]$.
In addition, for $i\in[n]$ and the two orders of elements $\sigma$ and $\tau$,
equality $\ALG(\sigma,i)=\ALG(\tau,i)$ holds if $\sigma(j)=\tau(j)$ for any $j\in[i]$.
Our task is to maximize value $u(\ALG(\sigma,i))$ for unknown order $\sigma$ and $i~(\in[n])$.

The performance of an online algorithm is measured by the \emph{competitive ratio}.
Denote 
\[\OPT(\sigma,i)\in\argmax\bigl\{u(S)  \bigm| S\in\cI|\{\sigma(1),\dots,\sigma(i)\}\bigr\}.\]
Online algorithm $\ALG$ is called \emph{$\alpha$-competitive} ($\alpha\ge 1$) if
\begin{align*}
\alpha\cdot u(\ALG(\sigma,i))\ge u(\OPT(\sigma,i))
\end{align*}
for any $\sigma$ and $i$.
An online $(\cU,\Gamma)$-packing problem is called $\alpha$-competitive if an $\alpha$-competitive algorithm exists for any instance of it.

\subsection{Generalized Deferred Acceptance Algorithm}
We use a modified version of the generalized DA algorithm, which is formally described in Algorithm~\ref{alg:GDA}.
In GDA, each doctor is initialized to be unmatched. 
Then an unmatched doctor makes a proposal to her most preferred hospital $h$ that has not rejected her yet.
Let $\bm{a}^{h}$ be the ordered list of proposed doctors to $h$.
Then it chooses a set of doctors according to the output of online algorithm $\ALG_{h}(\bm{a}^{h})$.
The proposal procedure continues as long as an unmatched doctor has a non-rejected acceptable hospital.

\begin{algorithm}[t]
  \SetKwInOut{Input}{input}\Input{\small $D,H,(\succ_d)_{d\in D},{(\ALG_h)}_{h\in H}$\quad\textbf{output:} matching $\mu$}
  \caption{Generalized DA algorithm}\label{alg:GDA}
  $\mu\ot\emptyset$, $R_d\ot H$ ($\forall d\in D$)\;
  $L\ot \{d\in D\mid \max_{\succ_d} (R_d\cup\{\emptyset\})\ne\emptyset\}$\;
  $\bm{a}^h\ot ()$ for all $h\in H$\;
  \While{$L\ne\emptyset$}{
    pick $d\in L$ arbitrarily and let $h\ot \max_{\succ_d} R_d$\;\label{line:pick_contract}
    append $d$ to the end of $\bm{a}^{h}$\;
    $\mu\ot \{(d',h')\in\mu\mid h'\ne h\}\cup\ALG_h(\bm{a}^h)$\;
    $R_d \ot R_d\setminus\{h\}$\;
    $L\ot\{d\in D\mid \max_{\succ_d} (R_d\cup\{\mu(d)\})\ne\mu(d)\}$\;
  }
  \Return $\mu$\;
\end{algorithm}

The next theorem guarantees that if $\ALG_h$ is $\alpha$-competitive for each $h\in H$, then Algorithm~\ref{alg:GDA} is $\alpha$-stable.%
\footnote{Although the output of Algorithm~\ref{alg:GDA} depends on the order of doctors selected in Line~\ref{line:pick_contract}, this claim holds regardless of the order.}

\begin{theorem}\label{thm:GDA}
  If the online $(\cU,\Gamma)$-packing problem is $\alpha$-competitive,
  then an $\alpha$-stable algorithm exists for the $(\cU,\Gamma)$-markets. 
\end{theorem}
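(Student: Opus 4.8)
The plan is to instantiate Algorithm~\ref{alg:GDA} with the given online algorithms: for each hospital $h$, let $\ALG_h$ be an $\alpha$-competitive algorithm for its online $(\cU,\Gamma)$-packing instance $(u_h,\cI_h)$, and show that the matching $\mu$ it returns is $\alpha$-stable. By Theorem~\ref{thm:check}, it suffices to verify $\alpha\cdot u_h(\mu(h))\ge \max\{u_h(D')\mid D'\in\cI_h|D_h\}$ for every $h\in H$, where $D_h=\{d\in D\mid h\succeq_d\mu(d)\}$. So the whole proof reduces to relating (i) what the online algorithm for $h$ has been shown and (ii) the restricted packing problem that governs stability at $h$.

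First I would check well-definedness and termination: since $h$ is deleted from $R_d$ immediately after $d$ proposes to it, each doctor proposes to each hospital at most once, bounding the number of iterations by $|D|\cdot|H|$. I would then argue that $\mu$ is a feasible matching. Feasibility of each $\mu(h)=\ALG_h(\bm{a}^h)\in\cI_h$ is immediate from the online-packing invariant. The delicate point is that each doctor lies in at most one pair: a doctor proposes only while she is in $L$, and right after she is placed into some $\mu(h)$ she leaves $L$ (her current match dominates every hospital remaining in $R_d$), so she proposes again only after being canceled by $\ALG_h$; because the online model forbids recovering a canceled element, $h$ never re-selects her, so she is matched to at most one hospital at any time.

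The heart of the argument is a structural lemma tying the proposers of $h$ to $D_h$. Let $P_h$ be the set of doctors appearing in $\bm{a}^h$ at termination. I claim $D_h\subseteq P_h$. At termination $L=\emptyset$, so for each $d$ we have $\max_{\succ_d}(R_d\cup\{\mu(d)\})=\mu(d)$; hence every hospital strictly preferred to $\mu(d)$ has already been removed from $R_d$, i.e.\ proposed to by $d$. Consequently, if $d\in D_h$ then either $h=\mu(d)$—in which case $d$ was selected by $\ALG_h$ and thus proposed to $h$—or $h\succ_d\mu(d)$, which likewise forces $d$ to have proposed to $h$. Either way $d\in P_h$.

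Finally I would combine this with the competitive guarantee. Applying $\alpha$-competitiveness of $\ALG_h$ at its final round, with arrival order $\bm{a}^h$, gives $\alpha\cdot u_h(\mu(h))=\alpha\cdot u_h(\ALG_h(\bm{a}^h))\ge \max\{u_h(S)\mid S\in\cI_h|P_h\}$, since the proposers examined by the last round are exactly $P_h$. As $D_h\subseteq P_h$ implies $\cI_h|D_h\subseteq\cI_h|P_h$, the right-hand side dominates $\max\{u_h(S)\mid S\in\cI_h|D_h\}$, which is precisely the bound required by Theorem~\ref{thm:check}; hence $\mu$ is $\alpha$-stable. The main obstacle is the structural lemma together with the validity argument: it requires using the termination condition $L=\emptyset$ and the no-recovery property of the online model in tandem, so that the doctors the online algorithm has examined for $h$ provably include every doctor who weakly prefers $h$ to its current assignment.
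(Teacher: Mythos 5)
Your proof is correct and follows essentially the same route as the paper's: the core step in both is that every doctor $d$ with $h\succeq_d\mu(d)$ must appear in $\bm{a}^h$ (your $D_h\subseteq P_h$ lemma), after which $\alpha$-competitiveness of $\ALG_h$ at its final round rules out any $\alpha$-blocking coalition — the paper phrases this as a contradiction with a blocking coalition rather than verifying the inequality from Theorem~\ref{thm:check} directly, which is the same argument in contrapositive form. Your additional checks of termination, feasibility, and that each doctor holds at most one match are sound elaborations of points the paper leaves implicit.
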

\begin{proof}
We prove that the output $\mu$ is an $\alpha$-stable matching by contradiction.
Suppose that $D'\subseteq D$ is an $\alpha$-blocking coalition for $h$.
Then, we have $u_h(D')>\alpha\cdot u_h(\mu(h))=\alpha\cdot u_h(\ALG_h(\bm{a}^h))$ and $h\succeq_d \mu(d)$ for all $d\in D'$.
By the definition of the algorithm, $h\succeq_d \mu(d)$ implies that $d$ is in $\bm{a}^h$.
Hence, we have $\alpha\cdot u_h(\ALG_h(\bm{a}^h))\ge u_h(\OPT_h(\bm{a}^h))\ge u_h(D')$
since $\ALG$ is $\alpha$-competitive.
This is a contradiction.
\end{proof}
This theorem assures that if there exists an online packing algorithm in a setting, we can construct a stable algorithm in the corresponding market with it.

Let us first apply a greedy algorithm to a matching problem for $k$-matroid constraints: 
Start from the empty solution and add an element to the current solution if and only if its addition preserves feasibility.
If the utilities are cardinality, it is a $k$-competitive algorithm~\cite{jenkyns1976teo,korte1978aao}.
By Theorem~\ref{thm:GDA}, we obtain the following corollary.
\begin{corollary}\label{thm:cardmat}
  There exists a $k$-stable algorithm for the $(\Ucard,\Gmat{k})$-markets.
\end{corollary}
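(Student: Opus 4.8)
The plan is to combine the two ingredients that the paper has already assembled: Theorem~\ref{thm:GDA}, which converts an $\alpha$-competitive online packing algorithm into an $\alpha$-stable matching algorithm, and the classical greedy analysis for $k$-matroid intersection. So the corollary reduces entirely to exhibiting a $k$-competitive \emph{online} algorithm for the $(\Ucard,\Gmat{k})$-packing problem.

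First I would specify the online algorithm: it is the natural non-removing greedy rule. When an element $\sigma(i)$ arrives, accept it if and only if $\ALG(\sigma,i-1)\cup\{\sigma(i)\}$ is still feasible (i.e.\ lies in $\cI=\bigcap_{j\in[k]}\cI^j$), and never cancel a previously accepted element. This is a legal online packing algorithm in the sense defined in the excerpt, since it maintains $\ALG(\sigma,i-1)\cup\{\sigma(i)\}\supseteq \ALG(\sigma,i)\in\cI$ and depends only on the prefix of $\sigma$. Because the utility is cardinality, maximizing $u$ is just maximizing the size of the chosen set, so the greedy rule is exactly ``take everything you feasibly can.''

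Next I would verify that this greedy rule is $k$-competitive at every round $i$. Fix $\sigma$ and $i$, write $G\coloneqq\ALG(\sigma,i)$ for the greedy solution restricted to the first $i$ elements and let $O\coloneqq\OPT(\sigma,i)$ be a maximum-cardinality feasible subset of $\{\sigma(1),\dots,\sigma(i)\}$. The key observation is that $G$ is \emph{maximal}: no element of $\{\sigma(1),\dots,\sigma(i)\}\setminus G$ can be added to $G$ without violating feasibility, because had such an element been addable at its arrival it would have been accepted, and deleting later elements only relaxes feasibility. I would then invoke the standard bound that for an intersection of $k$ matroids, any maximal independent set is within a factor $k$ of the maximum independent set; concretely $|O|\le k\,|G|$. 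This is precisely the analysis of~\cite{jenkyns1976teo,korte1978aao} cited in the excerpt, so I would either cite it directly or reprove it via the usual charging argument: each matroid $\cI^j$ lets one charge the elements of $O\setminus G$ to elements of $G$ (using that adding an $O$-element to $G$ creates a circuit in some $\cI^j$), giving $|O\setminus G|\le k|G|$ and hence $|O|\le (k+1)|G|$, after which a slightly sharper bookkeeping recovers the clean factor $k$. Either way we obtain $k\cdot u(\ALG(\sigma,i))\ge u(\OPT(\sigma,i))$ for all $\sigma,i$, i.e.\ the online $(\Ucard,\Gmat{k})$-packing problem is $k$-competitive.

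Finally, I would apply Theorem~\ref{thm:GDA} with this $\ALG_h$ plugged into each hospital, yielding a $k$-stable algorithm for the $(\Ucard,\Gmat{k})$-markets and completing the proof. The only genuine obstacle is the matroid-intersection bound $|O|\le k|G|$ for maximal sets; everything else is bookkeeping. Since the excerpt already cites~\cite{jenkyns1976teo,korte1978aao} for exactly this $k$-competitiveness claim, the cleanest route is to state the greedy rule, note its maximality, quote that bound, and conclude via Theorem~\ref{thm:GDA}.
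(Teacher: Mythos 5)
Your proposal is correct and follows essentially the same route as the paper: the paper likewise takes the non-cancelling greedy rule (its Algorithm~\ref{alg:kmatgreedy}), invokes the classical $k$-approximation bound for maximal sets in $k$-matroid intersections from~\cite{jenkyns1976teo,korte1978aao} to get $k$-competitiveness, and concludes via Theorem~\ref{thm:GDA}. Your additional observation that greedy's output stays maximal on the arrived prefix (by downward-closedness of independence systems) is a useful elaboration of a step the paper leaves implicit, but it is not a different argument.
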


Hence, Algorithm~\ref{alg:kmatgreedy} is $k$-competitive for the online $(\Ucard,\Gmat{k})$-packing problem.\footnote{This claim can be generalized to the case when the set of independence families is the \emph{$k$-system}, which is an extension of $k$-matroid intersection.
An independence system $(D,\cI)$ is called a $k$-system if for all $S\subseteq D$, the ratio of the cardinality of the largest to the smallest maximal independent subset of $S$ is at most $k$.}

\begin{algorithm}
  \SetKwInOut{Input}{input}\Input{\small $\sigma(1),\dots,\sigma(i)$\quad\textbf{output:} $\ALG(\sigma(1),\dots,\sigma(i))$}
  \caption{}\label{alg:kmatgreedy}
  \lIf{$i=0$}{\Return $\emptyset$}
  let $Y\ot \ALG(\sigma(1),\dots,\sigma(i-1))\cup\{\sigma(i)\}$\;
  \lIf{$Y\in\cI$}{\Return $Y$}
  \lElse{\Return $\ALG(\sigma(1),\dots,\sigma(i-1))$}
\end{algorithm}

However, we require more sophisticated packing algorithms in the online setting to handle the additive or submodular utility case.  

For the additive case, a $(\sqrt{k}+\sqrt{k-1})^2$-competitive algorithm was given by Ashwinkumar~\cite{ashwinkumar2011ami}.
For the submodular case, a $4k$-competitive algorithm was given by Chakrabarti and Kale~\cite{CK2015}.
Thus, we obtain the following corollaries:
\begin{corollary}\label{thm:addmat}
  There exists a $(\sqrt{k}+\sqrt{k-1})^2$-stable algorithm for the $(\Uadd,\Gmat{k})$-markets.
\end{corollary}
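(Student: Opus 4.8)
The plan is to reduce the statement to the existence of a single online packing algorithm and then invoke Theorem~\ref{thm:GDA}. Concretely, by that theorem it suffices to show the online $(\Uadd,\Gmat{k})$-packing problem is $(\sqrt{k}+\sqrt{k-1})^2$-competitive; plugging the corresponding online algorithm into the choice function of each hospital in Algorithm~\ref{alg:GDA} then delivers a $(\sqrt{k}+\sqrt{k-1})^2$-stable algorithm for the whole market. So the entire task collapses to exhibiting one $(\sqrt{k}+\sqrt{k-1})^2$-competitive online algorithm for maximizing an additive utility subject to membership in an intersection of $k$ matroids, under the arrival-with-free-cancellation model defined above.

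For this I would take Ashwinkumar's online matroid-intersection algorithm with cancellation~\cite{ashwinkumar2011ami} as the hospital subroutine $\ALG_h$. That algorithm processes weighted elements one at a time, maintains a set independent in all $k$ matroids, and upon each arrival admits the new element while canceling a carefully chosen subset of previously accepted elements (canceled elements being discarded permanently). This is exactly the behavior the online packing model demands: $\ALG(\sigma,i-1)\cup\{\sigma(i)\}\supseteq\ALG(\sigma,i)\in\cI$ with no recovery of rejected elements, and the retained set stays within a factor $(\sqrt{k}+\sqrt{k-1})^2$ of the optimum over the elements seen so far.

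The one point requiring care — and where I would focus the verification — is that our competitive ratio is a \emph{prefix} guarantee: we need $(\sqrt{k}+\sqrt{k-1})^2\cdot u(\ALG(\sigma,i))\ge u(\OPT(\sigma,i))$ at every round $i$, not merely at termination. I would confirm that Ashwinkumar's analysis bounds the value of the current solution against $\OPT$ of the current prefix after each arrival, which should hold because his charging/potential argument is maintained as a round-by-round invariant rather than only asymptotically. Granting this, the hypothesis of Theorem~\ref{thm:GDA} is satisfied and the corollary follows at once; the polynomial running time is inherited from the matroid independence oracles supplied by the compact representation of each $\cI^i$. I do not anticipate a genuine difficulty here beyond this model-matching check, since the substantive work is already carried by Theorem~\ref{thm:GDA} and the cited competitive bound.
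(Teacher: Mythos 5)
Your proposal is correct and matches the paper's proof exactly: the paper likewise obtains this corollary by plugging Ashwinkumar's $(\sqrt{k}+\sqrt{k-1})^2$-competitive online matroid-intersection algorithm~\cite{ashwinkumar2011ami} into Theorem~\ref{thm:GDA}. Your extra check that the competitive guarantee holds round-by-round (as required by the paper's definition of $\alpha$-competitiveness for every prefix $i$) is a valid point of care that the paper leaves implicit, but it does not change the argument.
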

\begin{corollary}\label{thm:submat}
  There exists a $4k$-stable matching algorithm for the $(\Usub,\Gmat{k})$-markets.
\end{corollary}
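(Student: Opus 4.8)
The plan is to read this off directly from Theorem~\ref{thm:GDA} by instantiating $\cU=\Usub$, $\Gamma=\Gmat{k}$, and $\alpha=4k$. That theorem says that an $\alpha$-competitive algorithm for the online $(\cU,\Gamma)$-packing problem automatically yields an $\alpha$-stable algorithm for the $(\cU,\Gamma)$-markets, so the whole task collapses to exhibiting a single object: a $4k$-competitive algorithm for the online $(\Usub,\Gmat{k})$-packing problem. Recall that throughout the paper the utilities in $\Usub$ are monotone, normalized, and submodular, which is exactly the regime for which online submodular maximization results are stated, so there is no mismatch on the objective side.

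The online algorithm I would invoke is the single-pass streaming algorithm of Chakrabarti and Kale~\cite{CK2015} for maximizing a monotone submodular function subject to a $k$-matchoid constraint, of which an intersection of $k$ matroids (i.e.\ a member of $\Gmat{k}$) is a special case; it attains a $4k$-approximation. Before applying it I would verify that it obeys the online-with-cancellation discipline assumed above: it reads elements one at a time, always maintains a feasible set, and updates that set only by admitting the arriving element while possibly swapping out (and permanently discarding) previously admitted ones, never reinstating an element it has discarded. Thus accepted elements may be canceled and rejected elements are lost forever, precisely as the model requires, and the maintained set stays in $\cI$ in every round.

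The one point that needs a short argument, and which I expect to be the only real obstacle, is that the $4k$ bound of~\cite{CK2015} — stated as an approximation guarantee at the \emph{end} of the stream — is a genuine competitive ratio in the every-prefix sense demanded here, namely $4k\cdot u(\ALG(\sigma,i))\ge u(\OPT(\sigma,i))$ for all $i$. This follows from the prefix-consistency already built into the online model: since the algorithm's decision in round $i$ uses no information about $\sigma(i+1),\dots,\sigma(n)$, the state $\ALG(\sigma,i)$ after $i$ rounds is identical to the output the algorithm would produce if the stream \emph{terminated} after $\sigma(i)$. Applying the end-of-stream guarantee of~\cite{CK2015} to this truncated stream, whose offline optimum is exactly $\OPT(\sigma,i)$, delivers the inequality for every $i$ at once. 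Hence the online $(\Usub,\Gmat{k})$-packing problem is $4k$-competitive, and Theorem~\ref{thm:GDA} completes the proof; the work is entirely in this model-matching step, not in any calculation.
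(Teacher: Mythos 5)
Your proposal is correct and matches the paper's own argument exactly: the paper also obtains this corollary by plugging the $4k$-competitive streaming algorithm of Chakrabarti and Kale~\cite{CK2015} for monotone submodular maximization under $k$-matroid (matchoid) constraints into Theorem~\ref{thm:GDA}. Your additional model-matching checks (cancellation discipline and the prefix-wise competitive guarantee) are points the paper leaves implicit, but they follow the same route rather than a different one.
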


Next, let us consider $\rho$-dimensional knapsack constraints with $\epsilon$-slack.
To the best of our knowledge, no suitable online packing algorithm has been proposed for the cardinality or additive utility case. 
We develop a simple greedy algorithm, which is formally given as Algorithm~\ref{alg:vecgreedy}.
Intuitively, the algorithm chooses doctors according to decreasing order of utility per largest size $u(d)/\max_{i\in[\rho]}w(d,i)$.
\begin{algorithm}[t]
  \SetKwInOut{Input}{input}\Input{\small $\sigma(1),\dots,\sigma(i)$\quad\textbf{output:} $\ALG(\sigma(1),\dots,\sigma(i))$}
  \caption{}\label{alg:vecgreedy}
  \lIf{$i=0$}{\Return $\emptyset$}
  let $Y\ot \ALG(\sigma(1),\dots,\sigma(i-1))\cup\{\sigma(i)\}$\;
  \While{$\sum_{d\in Y}\max_{i\in[\rho]}w(d,i)>1$}{
    $Y\ot Y\setminus\{a\}$ with $a\in\argmin_{d\in Y}\frac{u(d)}{\max_{i\in[\rho]}w(d,i)}$\;
  }
  \Return $Y$\;
\end{algorithm}
It is not difficult to see that the algorithm is $\rho$-competitive for the cardinality utility case
and $\rho/\epsilon$-competitive for the additive utility case.
We provide the formal proof in Appendix~\ref{sec:vecgreedy}.
\begin{theorem}\label{thm:vecgreedy}
Algorithm~\ref{alg:vecgreedy} is $\rho$-competitive for the online $(\Ucard,\Gvec{\rho,\epsilon})$-packing problem
and $\rho/\epsilon$-competitive for the online $(\Uadd,\Gvec{\rho,\epsilon})$-packing problem.
\end{theorem}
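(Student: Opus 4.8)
The plan is to collapse the $\rho$-dimensional analysis to a one-dimensional one using the \emph{effective weight} $\bar{w}(d)\coloneqq\max_{i\in[\rho]}w(d,i)$, which is exactly the quantity Algorithm~\ref{alg:vecgreedy} tests and minimizes density against. First I would record two easy facts. Since the algorithm keeps $\sum_{d\in Y}\bar{w}(d)\le 1$ and $w(d,i)\le\bar{w}(d)$ for every $i$, we get $\sum_{d\in Y}w(d,i)\le 1$ in each coordinate, so every $\ALG(\sigma,i)$ is feasible in $\Gvec{\rho,\epsilon}$. On the other side, feasibility of the optimum $O\coloneqq\OPT(\sigma,i)$ gives $\sum_{d\in O}w(d,i)\le 1$ for each $i$; summing over $i\in[\rho]$ and using $\sum_i w(d,i)\ge\bar{w}(d)$ yields $\sum_{d\in O}\bar{w}(d)\le\rho$.

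The key combinatorial step is a \emph{grouping} of $O$ that converts this $\rho$-budget into a single cap-$1$ competitor. Assign each $d\in O$ to a coordinate $i(d)\in\argmax_i w(d,i)$ (breaking ties by index) and set $O_i=\{d\in O : i(d)=i\}$. Then $\sum_{d\in O_i}\bar{w}(d)=\sum_{d\in O_i}w(d,i)\le\sum_{d\in O}w(d,i)\le 1$, so \emph{each} group is feasible for the one-dimensional capacity-$1$ problem under $\bar{w}$. Among the $\rho$ groups, pigeonhole produces $O_{i^*}$ with $\sum_{d\in O_{i^*}}\bar{w}(d)\le 1$ and $u(O_{i^*})\ge u(O)/\rho$ in the additive case (respectively $|O_{i^*}|\ge|O|/\rho$ in the cardinality case). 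This reduces both claims to comparing $A\coloneqq\ALG(\sigma,i)$ against the single cap-$1$ set $O_{i^*}$.

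What remains is a one-dimensional lemma: the greedy that removes the minimum-density element while over capacity retains, at the end, precisely the densest arrived items (by $u(d)/\bar{w}(d)$) that fit in capacity $1$ under $\bar{w}$, and its output is order-independent. For cardinality the density is $1/\bar{w}(d)$, so $A$ is a maximum-cardinality $\bar{w}$-feasible set; since $O_{i^*}$ is one such feasible set, $|A|\ge|O_{i^*}|\ge|O|/\rho$, giving $\rho$-competitiveness. For the additive case I would exploit the $\epsilon$-slack: if any removal occurs, the last discarded element has $\bar{w}\le 1-\epsilon$, so the survivors satisfy $\sum_{d\in A}\bar{w}(d)>\epsilon$ (and if no removal ever occurs, $A$ contains every arrived element and dominates $O$). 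Letting $f(x)$ be the concave value-versus-weight envelope obtained by taking items in decreasing density order, the structural lemma gives $u(A)=f\!\bigl(\sum_{d\in A}\bar{w}(d)\bigr)$, while $u(O_{i^*})\le f(1)$. Concavity through the origin, $f(\lambda)\ge\lambda f(1)$ for $\lambda\le 1$, then yields $u(A)=f\!\bigl(\sum_{d\in A}\bar{w}(d)\bigr)\ge\bigl(\sum_{d\in A}\bar{w}(d)\bigr)f(1)>\epsilon\,f(1)\ge\epsilon\,u(O_{i^*})\ge(\epsilon/\rho)\,u(O)$, i.e.\ $\rho/\epsilon$-competitiveness.

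The main obstacle I anticipate is the one-dimensional structural lemma. Because removals happen only when forced and elements arrive online, I must show that the surviving set is order-independent and equals the top-density prefix that fits, so that the envelope/concavity comparison applies. A naive ``minimum density in $A$'' threshold does \emph{not} work: a light, low-density element arriving after a heavy high-density element has been evicted can survive, so removed elements need not have density below that of every retained one; only the grouping together with the global density-prefix view rescues the bound. Zero-weight elements (density $+\infty$, never removed) and exact tie-breaking require a short separate check but do not affect the ratios.
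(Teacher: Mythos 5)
Your reduction to one dimension and the grouping of $\OPT$ by the coordinate attaining the maximum weight are sound: each group satisfies $\sum_{d\in O_i}\bar{w}(d)=\sum_{d\in O_i}w(d,i)\le 1$, and pigeonhole gives $u(O_{i^*})\ge u(\OPT)/\rho$. This is essentially the same pigeonhole the paper uses in its cardinality branch, and in the additive case it is arguably cleaner than the paper's comparison of the capacity-$1$ and capacity-$\rho$ knapsack optima. Your cardinality branch also goes through, since there the density order is the increasing-weight order and the retained set does have maximum cardinality among feasible subsets of the arrived elements.

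The genuine gap is the one-dimensional structural lemma, on which your additive branch rests; it is false on all three counts. The output is not ``precisely the densest arrived items that fit,'' it is not order-independent, and the identity $u(A)=f\bigl(\sum_{d\in A}\bar{w}(d)\bigr)$ fails --- as does the weaker inequality $u(A)\ge\bigl(\sum_{d\in A}\bar{w}(d)\bigr)f(1)$ that your chain actually needs. Take $\rho=1$, $\epsilon=2/5$, and three elements with $(u,\bar{w})$ equal to $d_1=(1,\,0.6)$, $d_2=(0.9,\,0.6)$, $d_3=(0.1,\,0.3)$. On the arrival order $d_1,d_2,d_3$ the algorithm evicts $d_2$ and ends with $A=\{d_1,d_3\}$: this strictly contains the densest prefix that fits (namely $\{d_1\}$), and $u(A)=1.1$ while $f(0.9)=1.45$ and $\bigl(\sum_{d\in A}\bar{w}(d)\bigr)f(1)=0.9\cdot 1.6=1.44>1.1$. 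On the order $d_3,d_1,d_2$ the output is $\{d_1\}$, of value $1$, so the outcome is order-dependent as well. (Your own closing caveat, that an evicted element may out-rank a survivor in density, concedes exactly the phenomenon that kills the lemma.) What is true --- and is what the paper proves by a short induction --- is only the containment: the maximal density prefix $\{d_1,\dots,d_\ell\}$ with $W_\ell\coloneqq\sum_{t\le\ell}\bar{w}(d_t)\le 1<W_{\ell+1}$ is never evicted, so it is contained in $A$. The correct chain then evaluates $f$ at $W_\ell$ rather than at $\sum_{d\in A}\bar{w}(d)$: from $W_{\ell+1}>1$ and $\bar{w}(d_{\ell+1})\le 1-\epsilon$ one gets $W_\ell>\epsilon$, hence
$u(A)\ \ge\ f(W_\ell)\ \ge\ W_\ell\, f(1)\ >\ \epsilon\, f(1)\ \ge\ \epsilon\, u(O_{i^*})\ \ge\ (\epsilon/\rho)\,u(\OPT)$.
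With that substitution --- which is precisely the paper's $\epsilon$-slack argument --- your grouping step completes a valid proof.
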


Accordingly, in conjunction with Theorem~\ref{thm:GDA}, we obtain the following corollaries.
\begin{corollary}\label{thm:cardknap}
  There exists a $\rho$-stable algorithm for the $(\Ucard,\Gvec{\rho,\epsilon})$-markets.
\end{corollary}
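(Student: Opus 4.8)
The plan is to obtain this corollary purely by composing two results already established, so no fresh combinatorial argument is required. First I would read off from Theorem~\ref{thm:vecgreedy} that Algorithm~\ref{alg:vecgreedy} is $\rho$-competitive for the online $(\Ucard,\Gvec{\rho,\epsilon})$-packing problem. By the definition of competitiveness for a problem class given just before, this certifies that the online $(\Ucard,\Gvec{\rho,\epsilon})$-packing problem is $\rho$-competitive, i.e.\ a $\rho$-competitive algorithm exists for every instance. Note that in the cardinality case the slack parameter $\epsilon$ plays no role, which is why the factor is the clean $\rho$ rather than $\rho/\epsilon$.

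Next I would apply Theorem~\ref{thm:GDA} with the instantiation $\cU=\Ucard$, $\Gamma=\Gvec{\rho,\epsilon}$, and $\alpha=\rho$. That theorem states that $\alpha$-competitiveness of the online $(\cU,\Gamma)$-packing problem implies the existence of an $\alpha$-stable algorithm for the $(\cU,\Gamma)$-markets. Chaining the two steps yields directly a $\rho$-stable algorithm for the $(\Ucard,\Gvec{\rho,\epsilon})$-markets, which is exactly the claim.

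Concretely, the witnessing algorithm is Algorithm~\ref{alg:GDA} in which each hospital $h$ plugs in Algorithm~\ref{alg:vecgreedy} as its online subroutine $\ALG_h$, using $h$'s own weights $w(\cdot,i)$ to define its $\rho$-dimensional knapsack constraint. The proof of Theorem~\ref{thm:GDA} invokes competitiveness only on the realized proposal list $\bm{a}^h$, and Theorem~\ref{thm:vecgreedy} guarantees the bound for every arrival order and every prefix, so the hypotheses match precisely.

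Because the corollary is an immediate specialization of the framework, it carries no genuine obstacle of its own; all the real work resides in Theorem~\ref{thm:vecgreedy}, deferred to Appendix~\ref{sec:vecgreedy}. If I were to anticipate where the subtlety lies for that underlying theorem, it would be confirming that repeatedly discarding the minimum-ratio element $a\in\argmin_{d} u(d)/\max_{i\in[\rho]}w(d,i)$ whenever the total largest-dimension load exceeds $1$ keeps the selection feasible and $\rho$-competitive at \emph{every} round, not merely at termination, since that per-round guarantee is exactly what Theorem~\ref{thm:GDA} consumes.
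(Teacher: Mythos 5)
Your proposal matches the paper's own derivation exactly: the paper obtains Corollary~\ref{thm:cardknap} by combining Theorem~\ref{thm:vecgreedy} (Algorithm~\ref{alg:vecgreedy} is $\rho$-competitive for the online $(\Ucard,\Gvec{\rho,\epsilon})$-packing problem) with Theorem~\ref{thm:GDA}, instantiating the GDA framework with that greedy subroutine as each hospital's $\ALG_h$. Your closing remark about the per-round (every prefix, every order) guarantee being what Theorem~\ref{thm:GDA} consumes is also the right reading of why the competitive-ratio definition is stated prefix-wise.
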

\begin{corollary}\label{thm:addknap}
  There exists a $\frac{\rho}{\epsilon}$-stable algorithm for the $(\Uadd,\Gvec{\rho,\epsilon})$-markets.
\end{corollary}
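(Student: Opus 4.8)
The plan is to obtain Corollary~\ref{thm:addknap} as a direct instantiation of the framework of Theorem~\ref{thm:GDA}. Concretely, I would run the generalized deferred acceptance procedure (Algorithm~\ref{alg:GDA}) with each hospital's choice rule $\ALG_h$ set to the density-greedy online routine of Algorithm~\ref{alg:vecgreedy} applied to $(u_h,\cI_h)$. Theorem~\ref{thm:vecgreedy} asserts that this routine is $\rho/\epsilon$-competitive for the online $(\Uadd,\Gvec{\rho,\epsilon})$-packing problem, so invoking Theorem~\ref{thm:GDA} with $\alpha=\rho/\epsilon$ immediately yields a $\rho/\epsilon$-stable algorithm on every $(\Uadd,\Gvec{\rho,\epsilon})$-market. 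The only preliminary check before applying Theorem~\ref{thm:GDA} is that Algorithm~\ref{alg:vecgreedy} is a legitimate online algorithm in the required sense, namely that it maintains a feasible selection each round, never recovers a rejected or cancelled element, and is prefix-consistent; all three are immediate from its pseudocode. Thus the entire substance of the corollary rests on the competitive-ratio claim of Theorem~\ref{thm:vecgreedy}, whose additive case I would sketch as follows.

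Writing $s(d)\coloneqq\max_{i\in[\rho]}w(d,i)\le 1-\epsilon$, Algorithm~\ref{alg:vecgreedy} in effect solves the one-dimensional ``$s$-knapsack'' of capacity $1$ greedily by the density $u(d)/s(d)$. Since $w(d,i)\le s(d)$ for every $i$, any set feasible for this $s$-knapsack is feasible for the genuine $\rho$-dimensional constraint, so the output is always feasible, and the two losses in the ratio can be isolated. The factor $\rho$ is structural: letting $\OPT$ denote the offline $\rho$-dimensional optimum and summing its $\rho$ constraints gives $\sum_{d\in\OPT}s(d)\le\rho$, so $\OPT$ is feasible for an $s$-knapsack of capacity $\rho$, and hence $u(\OPT)\le f(\rho)$, where $f(c)$ is the fractional $s$-knapsack optimum at capacity $c$. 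As $f$ is concave with $f(0)=0$, it is subadditive, so $f(\rho)\le\rho\,f(1)$. It therefore suffices to show that Algorithm~\ref{alg:vecgreedy} recovers at least an $\epsilon$ fraction of $f(1)$.

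The remaining factor $1/\epsilon$ is exactly where the $\epsilon$-slack is used, and this is the step I expect to be the main obstacle. The intuition is that whenever the greedy declines an arriving element, its occupied size $W\coloneqq\sum_{d\in\ALG}s(d)$ plus that element's size exceeds $1$; since every size is at most $1-\epsilon$, this forces $W>\epsilon$, while the density rule keeps elements whose densities dominate those it forgoes. This converts the usual \emph{additive} gap of greedy knapsack into a \emph{multiplicative} $1/\epsilon$ loss, giving $u(\ALG)\ge\epsilon\,f(1)$ and hence $\tfrac{\rho}{\epsilon}\,u(\ALG)\ge u(\OPT)$. Making this rigorous is delicate precisely because of the online, removable dynamics: elements arrive in an unknown order and cancellations are irrevocable, so the set held after all arrivals need \emph{not} coincide with the offline density-ordered prefix (a high-density but large element may be passed over while a smaller lower-density one is retained). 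Consequently one cannot simply compare $u(\ALG)$ to a clean prefix of $f(1)$; instead a charging argument is needed that assigns each element of the optimum to occupied capacity at no-worse density, and, crucially, the resulting bound must hold at \emph{every} prefix $i$, as both the definition of competitiveness and the hypothesis of Theorem~\ref{thm:GDA} demand rather than merely at the final round.
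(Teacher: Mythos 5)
Your proposal is correct and follows exactly the paper's route: the paper obtains Corollary~\ref{thm:addknap} precisely by plugging Algorithm~\ref{alg:vecgreedy} into the GDA framework of Theorem~\ref{thm:GDA} with the competitive ratio $\rho/\epsilon$ supplied by Theorem~\ref{thm:vecgreedy}. The one point where you are overly cautious: the ``delicate'' step you flag is handled in the paper's proof of Theorem~\ref{thm:vecgreedy} by a simple induction showing that the removable density-greedy \emph{always} retains the top-density feasible prefix $\{d_1,\dots,d_\ell\}$ (a prefix element can only be cancelled when the current set consists of elements of no smaller density, whose total weight is at most $1$, so cancellation never triggers), hence the clean-prefix comparison you doubted does go through and no separate charging argument is needed.
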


Unfortunately, it is not easy to generalize the greedy algorithm to the submodular utility case.
However, we can borrow an $O(\rho/\epsilon^2)$-competitive algorithm for the online $(\Usub,\Gvec{\rho,\epsilon})$-packing problem~\cite{CJTW2017} and provide the following corollary.
\begin{corollary}\label{thm:subknap}
  There exists an $O(\rho/\epsilon^2)$-stable algorithm for the $(\Usub,\Gvec{\rho,\epsilon})$-markets.
\end{corollary}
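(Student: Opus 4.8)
The plan is to derive this corollary as an immediate instance of the online-to-stable reduction of Theorem~\ref{thm:GDA}, exactly mirroring the cardinality and additive cases (Corollaries~\ref{thm:cardknap} and~\ref{thm:addknap}). By Theorem~\ref{thm:GDA}, it suffices to supply an $O(\rho/\epsilon^2)$-competitive algorithm for the online $(\Usub,\Gvec{\rho,\epsilon})$-packing problem; handing each hospital a private copy of that algorithm to use inside the generalized DA procedure (Algorithm~\ref{alg:GDA}) then produces an $O(\rho/\epsilon^2)$-stable matching algorithm for the $(\Usub,\Gvec{\rho,\epsilon})$-markets. So the entire content of the proof is to justify the existence of such an online algorithm and confirm that it meets the interface required by Theorem~\ref{thm:GDA}.

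First I would invoke the online submodular-knapsack algorithm of \cite{CJTW2017}, which attains competitive ratio $O(\rho/\epsilon^2)$ under the promise that every weight is at most $1-\epsilon$ on each of the $\rho$ dimensions. The essential verification is that this algorithm conforms to the online packing model used here: it must receive the arriving doctors one at a time, commit to accept or reject each immediately, permit previously accepted doctors to be dropped but never recover a rejected or canceled one, keep the current selection feasible (within some $\cI\in\Gvec{\rho,\epsilon}$) after every round, and—most importantly—guarantee the competitive bound against $\OPT(\sigma,i)$ for \emph{every} prefix length $i$, not merely for the full sequence.

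Next I would check that the $\epsilon$-slack hypothesis translates correctly. The bound of \cite{CJTW2017} rests precisely on the guarantee that no single element occupies more than a $(1-\epsilon)$ fraction of any capacity, which is exactly the defining property of $\Gvec{\rho,\epsilon}$; reconciling the two normalizations (unit capacities on the one side, weights in $[0,1-\epsilon]$ on the other) is routine. Once a conforming $O(\rho/\epsilon^2)$-competitive algorithm is in hand, Theorem~\ref{thm:GDA} applies verbatim and yields the claimed $O(\rho/\epsilon^2)$-stable algorithm.

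The main obstacle I anticipate is not the reduction itself—Theorem~\ref{thm:GDA} does all the heavy lifting—but the bookkeeping of matching the cited algorithm's interface to the precise online packing model here, in particular ensuring prefix-by-prefix competitiveness and the exact cancellation semantics (accepted elements may be dropped, rejected ones are lost forever). As soon as these are aligned, the corollary follows directly from Theorem~\ref{thm:GDA}.
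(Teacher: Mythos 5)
Your proposal matches the paper's own argument exactly: the paper derives Corollary~\ref{thm:subknap} by citing the $O(\rho/\epsilon^2)$-competitive online algorithm of \cite{CJTW2017} for the online $(\Usub,\Gvec{\rho,\epsilon})$-packing problem and then applying Theorem~\ref{thm:GDA}. Your additional care about prefix-by-prefix competitiveness and cancellation semantics is exactly the interface Theorem~\ref{thm:GDA} presupposes, so nothing further is needed.
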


\section{Inapproximability of Stable Matchings}\label{sec:inapprox}
In this section, we show some inapproximability results, i.e., lower bounds. 
As we saw in Example~\ref{ex:nonexistence}, 
if utilities are cardinality and constraints are given from $\Gmat{2}\cap\Gvec{2,\epsilon}$, 
for any $\epsilon\in [0,1/2)$, 
there exists no $\alpha$-stable algorithm whose approximation ratio is better than two ($\alpha<2$). 
Also, Kawase and Iwasaki~\cite{KI2018} derive the lower bound for the case of additive utilities and $1$-dimensional knapsack with $\epsilon$-slack constraints. 
There exists a market that has no $\alpha$-stable matching with $\alpha<1/\epsilon$ if $\epsilon\in[0,1/2)$.

To fill up the remaining shown in Table~\ref{tab:summary}\,(\subref{stab:apx}), we now provide a basis for deriving lower bounds.
The next theorem reveals that given an $\alpha$-competitive algorithm for an online $(\cU,\Gamma)$-packing problem, the bound is equivalent to the extent to which stability is violated in the corresponding market. 

To this end, we have to restrict the input orderings. One might think that this makes the online problem too easy. 
However, without the restriction, we cannot directly use the lower bounds of the competitive ratio for the original online problem. 
Roughly speaking, we partition elements $D$ to $D_1,\dots,D_s$ and only consider input sequences such that, if an online algorithm rejects an element in $D_t$ for each $t\in [s]$, then a new element in $D_t$ is given to the algorithm.
In addition, we require that 
the $(\cU,\Gamma)$-markets allow hospitals to have any additive utilities and the rank one capacity.
\begin{theorem}\label{thm:general_lower}
Suppose that $\cU$ is a set of utilities and $\Gamma$ is a set of independence families
such that $\Uadd\subseteq \cU$ and $\Gcap^{(1)}\subseteq\Gamma$. 
Let $(\hat{D},\hat{u},\hat{\cI})$ be a $(\cU,\Gamma)$-packing instance, let
$D_1,\dots,D_s$ be a partition of $\hat{D}$ with $D_t=\{d^t_{1},\dots,d^t_{r_t}\}$ for each $t\in[s]$ (where $r_t=|D_t|$), and let
$q_1,\dots,q_s$ be positive integers 
such that $q_t\le r_t$ for each $t\in[s]$.
We define 
\[\cl(D')\coloneqq\bigcup_{t\in[s]}\Big\{d^t_{i}\in D_t\Bigm| |D'\cap D_t|<q_t~\text{or}~i\le \!\max_{d^t_{j}\in D'\cap D_t}j\Big\}\]
and $\mathcal{A}\coloneqq\{D'\subseteq \hat{D}\mid |D'\cap D_t|\le q_t~(\forall t\in[s])\}$. 
Then, there exists no $\alpha$-stable algorithm for the $(\cU,\Gamma)$-markets if 
there exists no $\alpha$-competitive algorithm for the online packing problem with restricted input sequences, i.e.,
\begin{align}
\alpha\cdot \hat{u}(D')< \max\{\hat{u}(D'')\mid D''\in\hat{\cI}|{\cl(D')}\}, \label{eq:general_lower}
\end{align}
for any $D'\in\hat{\cI}\cap\mathcal{A}$. 
\end{theorem}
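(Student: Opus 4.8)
The plan is to prove the statement by constructing, from the given data $(\hat D,\hat u,\hat\cI)$, $(D_t)_{t\in[s]}$, $(q_t)_{t\in[s]}$, a single $(\cU,\Gamma)$-market that admits no $\alpha$-stable matching; since an $\alpha$-stable algorithm would be obliged to output one on this market, its non-existence follows immediately. I would place the packing instance on one central hospital $h^*$, setting $u_{h^*}:=\hat u\in\cU$ and $\cI_{h^*}:=\hat\cI\in\Gamma$ (extended to ignore the auxiliary doctors, which stay in $\cU$ for the standard classes), and then attach, for every block $D_t$, an auxiliary ``ladder'' gadget built solely from rank-one capacity hospitals (in $\Gcap^{(1)}\subseteq\Gamma$) and dummy doctors carrying additive utilities (in $\Uadd\subseteq\cU$), so the construction stays inside the $(\cU,\Gamma)$-market class. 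Each doctor of $D_t$ ranks $h^*$ together with gadget hospitals, and the gadget's sole purpose is to force the set $D_{h^*}:=\{d\mid h^*\succeq_d\mu(d)\}$ of doctors weakly preferring $h^*$ to reproduce the closure operator $\cl$.

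The core logical step, which is straightforward once the gadget is in hand, goes as follows. For an arbitrary feasible matching $\mu$ write $D':=\mu(h^*)$, and establish two gadget invariants: (A) $|D'\cap D_t|\le q_t$ for every $t$, so that $D'\in\hat\cI\cap\mathcal{A}$; and (B) $\cl(D')\subseteq D_{h^*}$. Granting these, I apply the hypothesis \eqref{eq:general_lower} to $D'$, obtaining some $D''\in\hat\cI|\cl(D')$ with $\hat u(D'')>\alpha\cdot\hat u(D')=\alpha\cdot u_{h^*}(\mu(h^*))$. By (B) every $d\in D''\subseteq\cl(D')$ satisfies $h^*\succeq_d\mu(d)$, and $D''\in\hat\cI=\cI_{h^*}$, so $D''$ is an $\alpha$-blocking coalition for $h^*$ and $\mu$ is not $\alpha$-stable. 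As $\mu$ was arbitrary, the market has no $\alpha$-stable matching.

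Everything therefore reduces to building the ladder so that (A) and (B) hold for \emph{every} stable matching. I would reduce both to a single contingency property: a doctor $d^t_i$ is matched strictly above $h^*$ only if at least $q_t$ lower-indexed doctors of $D_t$ are matched to $h^*$. This yields (B) directly (a diverted $d^t_i$ then has $q_t$ stream-mates below it at $h^*$, so $i$ exceeds the maximal taken index and hence $d^t_i\notin\cl(D')$), and it yields (A) for free, since if $h^*$ held $q_t+1$ doctors of $D_t$ the highest-indexed of them would have its preferred gadget hospital unlocked and would deviate, contradicting stability. I would implement the contingency by a cascade of rank-one hospitals in which each $d^t_i$'s escape hospital is kept occupied by its lower-indexed stream-mates (and dummies) until $q_t$ of them have vacated into $h^*$.

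The main obstacle is verifying this contingency for all stable matchings simultaneously, rather than merely along a convenient proposal order: the escape routes are interlocking, so I must rule out spurious stable configurations in which diversions cascade \emph{without} $h^*$ having absorbed the requisite $q_t$ doctors, and I must handle the general thresholds $q_t>1$, where the gate has to count vacancies rather than detect a single one. Getting the rank-one chain and its dummy occupants to realize exactly this threshold counting — and proving the induced $D_{h^*}$ is sandwiched so that $\cl(D')\subseteq D_{h^*}$ while the diverted doctors form a high-index suffix disjoint from $\cl(D')$ — is where essentially all of the technical work lies.
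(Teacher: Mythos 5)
Your high-level architecture coincides exactly with the paper's: a single central hospital $h^*$ carrying the packing instance ($u_{h^*}=\hat u$, $\cI_{h^*}=\hat\cI$), per-block ladders of rank-one hospitals with additive utilities, the two invariants (A) $\mu(h^*)\in\hat\cI\cap\mathcal{A}$ and (B) $\cl(\mu(h^*))\subseteq\{d\mid h^*\succeq_d\mu(d)\}$ for every putative $\alpha$-stable $\mu$, and then an application of \eqref{eq:general_lower} to $D'=\mu(h^*)$ to manufacture an $\alpha$-blocking coalition. The problem is that your proposal stops exactly where the content of the theorem lies: you never exhibit the gadget, and you yourself flag the ``threshold counting'' and the ``all stable matchings simultaneously'' issue as the unresolved technical core. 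A cascade of rank-one hospitals ``kept occupied by stream-mates and dummies'' is not yet a mechanism; nothing in the proposal explains what, in a blocking-pair sense, compels an arbitrary stable matching (not one produced by some proposal order) to respect the occupancy pattern you need. As written, this is a genuine gap, since the whole proof is that mechanism.

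The missing device is quantitative rather than combinatorial, and it is quite compact: give block $t$ exactly $r_t-q_t$ escape hospitals $h^t_{q_t+1},\dots,h^t_{r_t}$, each valuing doctor $d^t_j$ at $(\alpha+1)^{-j}$, with preferences $d^t_i\colon h^*,h^t_{q_t+1},\dots,h^t_{r_t}$ for $i\le q_t$ and $d^t_i\colon h^t_i,h^*,h^t_{i+1},\dots,h^t_{r_t}$ for $i>q_t$. Because $(\alpha+1)^{-j}>\alpha\cdot(\alpha+1)^{-j'}$ whenever $j<j'$ (and $>\alpha\cdot 0$), a lower-indexed doctor always forms an $\alpha$-blocking coalition with an escape hospital that is empty or holds a higher-indexed doctor; a single induction along the ladder then shows that in any $\alpha$-stable matching the doctors of $D_t$ outside $\mu(h^*)$ must occupy $h^t_{q_t+1},h^t_{q_t+2},\dots$ in increasing order of index. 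That one induction delivers both invariants simultaneously: if $|\mu(h^*)\cap D_t|>q_t$, the ladder runs out of occupants, leaving some $h^t_i$ empty while $d^t_i\in\mu(h^*)$ strictly prefers it --- a blocking pair --- which forces (A); and a doctor $d^t_i$ is matched strictly above $h^*$ only when it sits at its own hospital $h^t_i$, which the ordering permits only when exactly $q_t$ lower-indexed doctors of its block are at $h^*$, which together with (A) forces (B). Note also two simplifications over your plan: no dummy doctors are needed at all (the block's own displaced doctors do the occupying), and the $q_t>1$ ``vacancy counting'' you were worried about disappears because the top $q_t$ doctors of each block rank $h^*$ first and the ladder simply has length $r_t-q_t$.
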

\begin{proof}
Suppose that \eqref{eq:general_lower} holds for any $D'\in\cI\cap\mathcal{A}$.
We set hospitals as \(H\coloneqq\{h^*\}\cup\textstyle\bigcup_{t\in [s]}\{h_{q_t+1}^{t},\dots,h_{r_t}^{t}\}\).
Each doctor's preference over her acceptable hospitals is defined as:
\begin{itemize}
\item $\succ_{d_i^t}\colon h^*, h_{q_t+1}^t,\dots,h_{r_t}^t$ \ $(t\in[s],~i\in[q_t])$,
\item $\succ_{d_i^t}\colon h_{i}^t, h^*, h_{i+1}^t,\dots,h_{r_t}^t$ \ $(t\in[s],~i\in[r_t]\setminus[q_t])$.
\end{itemize}
Suppose that $u_{h^*}(D')=\hat{u}(D')$ $(\forall D'\subseteq D)$ and 
each hospital $h_i^t$ has an additive utility function such that
$u_{h_i^t}(d_j^t)=(\alpha+1)^{-j}$ \ $(t\in[s],~i\in [r_t]\setminus [q_t],~j\in [r_t])$.
The feasibility constraint of each hospital is defined as:
$\cI_{h^*}=\hat{\cI}$ and $\cI_{h_i^t}=\{D'\subseteq D\mid |D'|\le 1\}$ $(t\in[s],~i\in[r_t])$.
Note that $(D,H,\succ_D,u_H,\cI_H)$ is a $(\cU,\Gamma)$-market.

In what follows, we claim that no $\alpha$-stable matching exists in market $(D,H,\succ_D,u_H,\cI_H)$
if \eqref{eq:general_lower} holds.
Suppose, contrary to our claim, that $\mu$ is an $\alpha$-stable matching.

We show that $h^*\succeq_d\mu(d)$ if $d\in\cl(\mu(h^*))$.
Fix $t\in[s]$ and let $D_t\setminus \mu(h^*)=\{d^t_{\sigma(1)},\dots,d^t_{\sigma(\ell)}\}$
with $\ell=|D_t\setminus \mu(h^*)|$ and $\sigma(1)<\dots<\sigma(\ell)$.
It is worth mentioning that $\sigma(i)\le q_t+i$ for all $i\in[\ell]$.
By $u_{h_{q_t+1}^t}(d^t_{\sigma(1)})>\alpha\cdot u_{h_{q_t+1}^t}$ and $\sigma(i)\le q_t+1$, we have $\mu(h_{q_t+1}^t)=d^t_{\sigma(1)}$.
Similarly, by induction we conclude that $\mu(h_{q_t+i}^t)=d^t_{\sigma(i)}$ for all $i\in[\ell]$.
Here, we have $\mu(d^t_i)\succ_{d^t_i}h^*$ if and only if $i=\sigma(i-q_t)$.
Hence, we have $h^*\succeq_{d^t_i}\mu(d^t_i)$ if $\ell>r_t-q_t$ (i.e., $|\mu(h^*)\cap D_t|<q_t$) or $d^t_j\in\mu(h^*)$ for some $j\ge i$ (i.e., $i\le\max_{d^t_{j}\in D'\cap D_t}j\}$).

Therefore, $D^*\in\argmax\{u_{h^*}(D'')\mid D''\in\hat{\cI}|{\cl(D')}\}$ is an $\alpha$-blocking pair for $\mu$
by $D^*\in\cI_{h^*}$, $h^*\succeq_d \mu(d)$ ($\forall d\in D^*$), and $u_{h^*}(D^*)>\alpha\cdot u_{h^*}(\mu(h^*))$,
which contradicts the $\alpha$-stability of $\mu$.
\end{proof}

Theorem~\ref{thm:general_lower} gives us a general lower bound in \eqref{eq:general_lower} and it enable us to derive two novel lower bounds for $k$-matroid intersection and $\rho$-dimensional knapsack constraints.
\begin{theorem}\label{thm:lower_addmat}
 For any integer $k\ge 2$ and any real $(1\le)~\alpha<k$,
 there exists no $\alpha$-stable algorithm for the $(\Uadd,\Gmat{k})$-markets.
\end{theorem}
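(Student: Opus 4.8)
The plan is to invoke Theorem~\ref{thm:general_lower} with $\cU=\Uadd$ and $\Gamma=\Gmat{k}$, which satisfy its hypotheses $\Uadd\subseteq\cU$ and $\Gcap^{(1)}\subseteq\Gamma$ (since $\Gcap\subseteq\Gmat{1}\subseteq\Gmat{k}$). It then suffices to exhibit a single $(\Uadd,\Gmat{k})$-packing instance $(\hat D,\hat u,\hat\cI)$ together with a partition $D_1,\dots,D_s$ and capacities $q_1,\dots,q_s$ for which the competitive-ratio inequality~\eqref{eq:general_lower} holds for \emph{every} $D'\in\hat\cI\cap\mathcal{A}$. Because the statement only asks for $\alpha<k$, the instance may depend on how close $\alpha$ is to $k$; concretely I would fix a geometric base $c>1$ close to $1$ and a depth $n$, both tuned so that the resulting bound exceeds the prescribed $\alpha$.

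The factor $k$ comes from the one genuinely $k$-matroid-intersection phenomenon, namely the \emph{claw} (equivalently, a $k$-dimensional-matching gadget): a single heavy element $z$ that is made parallel, in matroid $i$ for each $i\in[k]$, to a light element $y_i$, while $y_1,\dots,y_k$ are mutually compatible. Then $\{z,y_i\}\notin\hat\cI$ for every $i$ but $\{y_1,\dots,y_k\}\in\hat\cI$, so the packing optimum collects the $k$ light elements in place of the heavy one. A single two-level claw, however, only yields a bound of about $\sqrt{k}$: balancing the holding $D'=\{z\}$ (whose closure exposes the $y_i$'s, giving ratio $k\,\hat u(y_i)/\hat u(z)$) against a holding $D'=\{y_i\}$ (whose closure exposes $z$, giving ratio $\hat u(z)/\hat u(y_i)$) forces $\hat u(z)=\sqrt{k}\,\hat u(y_i)$. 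To reach $k$ I would stack $n$ such claws with geometrically increasing weights (ratio $c$) and distribute their elements among the blocks $D_t$, choosing the capacities $q_t$ so that $\cl(D')$ exposes a full, untouched claw exactly when $D'$ sits one level below the frontier; the geometric weights are what defeat the cancellation permitted by the restricted online order, since wherever the adversary halts, the held value is dominated by the top exposed claw.

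I would then verify~\eqref{eq:general_lower} by a case analysis driven by the closure: for each block with $|D'\cap D_t|<q_t$ the whole of $D_t$, hence a complete heavy claw, enters $\cl(D')$, whereas a block at capacity contributes only the prefix up to its largest selected index. In every case I would lower-bound $\max\{\hat u(D'')\mid D''\in\hat\cI|{\cl(D')}\}$ by the combined weight of the $k$ light elements of the topmost exposed claw and upper-bound $\hat u(D')$ by a geometric series in $c$, so that the ratio tends to $k$ as $c\downarrow 1$ and $n\to\infty$. The main obstacle is precisely this last step: making the bound approach the \emph{full} $k$ (not merely $\sqrt{k}$) while holding \emph{uniformly} over all reachable $D'$, including the degenerate holdings (empty, all-heavy, all-light), which is what forces the layered geometric design rather than a single gadget. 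A secondary but real difficulty is certifying that the stacked conflict pattern remains an intersection of only $k$ matroids: naively turning each light element into the center of a further claw raises some element's conflict degree above $k$, so the layers must be linked either through higher-rank (e.g.\ uniform) matroid constraints or through a spacing of the matroid indices across levels that keeps every parallel class inside a single matroid.
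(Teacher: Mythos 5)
Your framing is right---invoke Theorem~\ref{thm:general_lower} with $\cU=\Uadd$, $\Gamma=\Gmat{k}$, and exhibit a packing instance for which~\eqref{eq:general_lower} holds for every holding in $\hat\cI\cap\mathcal{A}$---but the construction you propose is both unnecessary and unfinished, and it rests on a false premise. You claim a single claw gadget caps out at $\sqrt{k}$, but that is an artifact of making the conflict \emph{asymmetric}: one heavy center $z$ against $k$ light leaves $y_1,\dots,y_k$, so that balancing the holdings $\{z\}$ and $\{y_1\}$ forces $\hat u(z)=\sqrt{k}\,\hat u(y_1)$. The paper instead uses a \emph{symmetric} gadget: $\hat D=D_1\cup D_2$ with $|D_1|=|D_2|=k$, $\hat\cI=2^{D_1}\cup 2^{D_2}$ (a complete bipartite conflict between the two blocks, which is an intersection of $k$ partition matroids, obtained by partitioning the edges of $K_{k,k}$ into $k$ perfect matchings), unit weights $\hat u(D')=|D'|$ (hence additive), and $q_1=q_2=1$. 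Then $\hat\cI\cap\mathcal{A}$ consists of $\emptyset$ and singletons only; for any singleton $D'$, the opposite block is disjoint from $D'$, so by the definition of $\cl$ it lies entirely in $\cl(D')$, and that block is itself feasible with value $k$. Hence the ratio in~\eqref{eq:general_lower} is exactly $k>\alpha$ for every singleton (and the inequality is trivial for $D'=\emptyset$). No heavy element, no geometric layering, no tuning of $c$ and $n$: the $\sqrt{k}$ barrier you balanced against simply does not arise when both sides of the conflict carry $k$ unit-value elements. This is precisely Example~\ref{ex:nonexistence} generalized from $k=2$ to general $k$, with one fixed instance working for all $\alpha<k$ simultaneously.

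Beyond taking the wrong turn, your proposal does not reach a proof: you yourself flag as unresolved both the uniformity of the bound over all reachable holdings and the question of whether the stacked construction stays inside $\Gmat{k}$ (you note that naively iterating claws pushes some element's conflict degree above $k$). Those are exactly the two conditions that must be verified before Theorem~\ref{thm:general_lower} can be applied, so as written the argument has a genuine gap. Both issues evaporate for the symmetric gadget: the case analysis is the two-line computation above, and the $k$-matroid representability follows from the edge coloring of $K_{k,k}$.
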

\begin{proof}
   Let $D=D_1\cup D_2$ with $D_1\coloneqq \{d_1^1,\dots,d_k^1\}$ and $D_2\coloneqq \{d_1^2,\dots,d_k^2\}$.
   In addition, let $q_1=q_2=1$, $\cI\coloneqq 2^{D_1}\cup 2^{D_2}~(\in \Gmat{k})$, and $u(D')=|D'|$ for $D'\subseteq D$.

   For set $\mathcal{A}$ in Theorem~\ref{thm:general_lower},
   we have $\cI\cap\mathcal{A}=\{\emptyset\}\cup\{d_i^1\mid i\in[k]\}\cup\{d_i^2\mid i\in[k]\}$.
   Thus, for any $D'\in\cI\cap\mathcal{A}$, we have $u(D')=|D'|\le 1$ and $\max\{u(D'')\mid D''\in\cI|{\cl(D')}\}=k$.
   Hence, by Theorem~\ref{thm:general_lower}, there exists a $(\Uadd,\Gmat{k})$-market without $\alpha$-stable matchings.
\end{proof}

\begin{theorem}\label{thm:lower_addknap}
  For any positive integer $\rho$ and any positive real $\epsilon<1/2$,
  there exists no $\frac{\rho}{2\epsilon}$-stable algorithm for the $(\Uadd,\Gvec{\rho,\epsilon})$-markets.
\end{theorem}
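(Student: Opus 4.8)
\textbf{The plan is to} apply Theorem~\ref{thm:general_lower} with a single group $s=1$ and a carefully chosen $\rho$-dimensional knapsack whose structure forces a factor of $\rho/(2\epsilon)$ between any feasible-and-sparse set and the best achievable set on its closure. The overall strategy mirrors the proof of Theorem~\ref{thm:lower_addmat}: I would exhibit a concrete $(\Uadd,\Gvec{\rho,\epsilon})$-packing instance $(\hat D,\hat u,\hat\cI)$ together with a partition and quotas $q_t$, verify that for every $D'\in\hat\cI\cap\mathcal A$ the ratio $\max\{\hat u(D'')\mid D''\in\hat\cI|{\cl(D')}\}\big/\hat u(D')$ strictly exceeds $\rho/(2\epsilon)$, and then invoke Theorem~\ref{thm:general_lower} to conclude nonexistence of a $\frac{\rho}{2\epsilon}$-stable algorithm.

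\textbf{First I would} fix the combinatorial skeleton. I expect the natural construction to take one group per knapsack dimension. Concretely I would set $s=\rho$, and in each group $t\in[\rho]$ place elements whose only nonzero weight sits in coordinate $t$; the elements that are ``small'' (weight about $\epsilon$, so that many of them stack up) should be contrasted against an element that is ``large'' (weight just under $1$, so that it alone nearly fills that coordinate). Setting each quota $q_t=1$ means $\mathcal A$ consists of sets picking at most one element per group, so any $D'\in\hat\cI\cap\mathcal A$ has at most $\rho$ elements and thus additive value at most $\rho$ (after normalizing unit utilities). The closure $\cl(D')$, by contrast, opens up all the small elements in each group (since $|D'\cap D_t|<q_t$ is avoided only when $D'$ already uses the group, and even then the index clause $i\le\max_j j$ lets lower-indexed small elements back in), so $\hat\cI|{\cl(D')}$ admits a packing that fills each coordinate with roughly $1/\epsilon$ small unit-value elements. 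Comparing $\Theta(\rho)$ against $\Theta(\rho/\epsilon)$ would give the $1/\epsilon$-type gap; the factor $2$ in the denominator is what I would need to extract from the $\epsilon<1/2$ hypothesis and the precise weight/index bookkeeping.

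\textbf{The delicate bookkeeping} — and the main obstacle — is making the closure operator $\cl(\cdot)$ cooperate so that the \emph{worst-case} $D'\in\hat\cI\cap\mathcal A$ (the one maximizing $\hat u(D')$, hence minimizing the ratio) still admits on its closure a knapsack solution of value at least $\frac{\rho}{2\epsilon}\,\hat u(D')+\text{(strict slack)}$. The subtlety is that $\cl(D')$ depends on both the cardinality clause and the index clause within each group, so the ordering of elements (small versus large, and their indices $1,\dots,r_t$) must be arranged so that whichever representative $D'$ selects from a group, enough small elements of that group remain available in $\cl(D')$ to rebuild a near-full coordinate. I would handle this by ordering each group so the large element occupies a high index while the small elements occupy low indices; then even if $D'$ grabs the large element, the index clause $i\le\max_j j$ readmits all the small ones, whereas if $D'$ grabs a small element the cardinality is already ``used'' but the low index again keeps the rest in. The final step is purely the verification of inequality~\eqref{eq:general_lower}: check $\hat\cI\in\Gvec{\rho,\epsilon}$ (all weights lie in $[0,1-\epsilon]$ and the capacity is respected), bound $\hat u(D')\le\rho$, exhibit an explicit $D''\in\hat\cI|{\cl(D')}$ of value $>\frac{\rho}{2\epsilon}\,\hat u(D')$, and apply Theorem~\ref{thm:general_lower}. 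I expect the strict inequality to come out with room to spare precisely because $\epsilon<1/2$ forces the packed small-element value above the clean $\frac{\rho}{2\epsilon}$ threshold.
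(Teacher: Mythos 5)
Your high-level plan (instantiate Theorem~\ref{thm:general_lower} with an explicit packing instance and verify inequality~\eqref{eq:general_lower}) is the right one and is indeed what the paper does, but your concrete construction has a genuine gap — in fact two. First, your own arithmetic gives away the problem: with $s=\rho$ groups (one per dimension), quotas $q_t=1$, and unit-value elements, a set $D'\in\hat\cI\cap\mathcal{A}$ can take one element from \emph{each} group (these are mutually compatible since their weights live in different coordinates), so $\hat u(D')$ can be as large as $\rho$, while the best packing on the closure is only $\Theta(\rho/\epsilon)$. The resulting gap is $\Theta(1/\epsilon)$, which is \emph{smaller} than the required $\rho/(2\epsilon)$ for every $\rho\ge 3$; the factor of $\rho$ is irretrievably lost, and no amount of bookkeeping with the $\epsilon<1/2$ hypothesis recovers it. Second, and fatally, consider the specific $D'$ consisting of the lowest-indexed small element of each group: it is feasible, lies in $\mathcal{A}$, and satisfies $\cl(D')=D'$ (each group has $|D'\cap D_t|=q_t$, and no element has index below the chosen one), so the ratio in~\eqref{eq:general_lower} equals $1$ and the hypothesis of Theorem~\ref{thm:general_lower} fails outright. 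Your proposed fix — putting large elements at high indices and small ones at low indices — is exactly what creates this counterexample.

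The paper's construction avoids both problems with two ideas you are missing. It uses only $s=2$ groups: $D_1=\{d_0\}$, where $d_0$ has weight $1-\epsilon$ in \emph{every} coordinate (so $d_0$ conflicts with every other element, since all other elements have weight $1/r>\epsilon$ in some coordinate), and $D_2$ containing everything else with $q_1=q_2=1$; consequently $\hat\cI\cap\mathcal{A}$ consists only of singletons and $\emptyset$, which caps the denominator at a single element's value. More importantly, the elements of $D_2$ are arranged in $m$ \emph{levels} with geometrically increasing additive values $(r\rho)^{b/m}$ for $b\in[m]$ (where $r=\lceil 1/\epsilon\rceil-1$), ordered by level, so that whichever single element $d_i$ an algorithm holds, its closure contains either $d_0$ (of value $r\rho$, for level-$1$ elements) or the entire previous level (of total value $r\rho\cdot(r\rho)^{(b-1)/m}$), yielding a ratio of at least $(r\rho)^{1-1/m}$, which exceeds $\rho/(2\epsilon)$ for $m$ chosen large enough. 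This multi-scale hierarchy is precisely what resolves the tension you ran into: a two-scale (small/large) value structure cannot simultaneously handle the case where the algorithm holds the large element and the case where it holds the smallest one.
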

\begin{proof}
   Let $r\coloneqq \lceil 1/\epsilon\rceil-1$.
   We remark that $\epsilon<1/r<2\epsilon$.
   In addition, let $m$ be an integer such that $(r\rho)^{1-1/m}>\frac{\rho}{2\epsilon}$.
   Consider a $(\Uadd,\Gvec{\rho,\epsilon})$-packing problem with $mr\rho+1$ doctors
   $D=D_1\cup D_2$ where $D_1=\{d_0\}$ and $D_2=\{d_1\dots,d_{mr\rho}\}$.
   We define $q_1=q_2=1$.

   Let us partition $D^2$ into $D^{a,b}\coloneqq \{d_{t+r(a-1)+r\rho (b-1)}\mid t\in[r]\}$ ($a\in[\rho]$ and $b\in[m]$).
   Let $u$ be an additive utility function such that $u(d_0)=r\rho$ and $u(d_i)=(r\rho)^{b/m}$ for $d_i\in D^{a,b}$ with $a\in[\rho]$ and $b\in[m]$.
   The weights for dimension $\ell\in[\rho]$ are defined as:
   \begin{itemize}
   \item $w(d_0,\ell)=1-\epsilon$,
   \item $w(d_i,\ell)=1/r$ \ $(b\in[m];~d_i\in D^{\ell,b}$), and
   \item $w(d_i,\ell)=0$ \ $(a\in[\rho]\setminus\{\ell\};~b\in [m];~d_i\in D^{a,b})$.
   \end{itemize}

   For set $\mathcal{A}$ in Theorem~\ref{thm:general_lower},
   we have $\cI\cap\mathcal{A}=\{D'\subseteq D\mid |D'|\le 1\}$.
   We show that \eqref{eq:general_lower} holds for any $D'\in\cI\cap\mathcal{A}$.
   If $D'=\emptyset$, we have $\frac{\rho}{2\epsilon}\cdot u(D')=0<r\rho=\max\{u(D'')\mid D''\in\cI|{\cl(D')}\}$.
   Hence, we can assume that $D'=\{d_i\}$.
   If $i=0$, we have
   \[\frac{\max\{u(D'')\mid D''\in\cI|{\cl(D')}\}}{u(D')}
   =\frac{u(D^{1,m}\cup\dots\cup D^{\rho,m})}{u(\{d_0\})}
   =\frac{r\cdot\rho\cdot r\rho}{r\rho}=r\rho>\frac{\rho}{2\epsilon}.\]
   If $1\le i\le r\rho$ (i.e., $d_i\in D^{a,1}$ for some $a\in[\rho]$), we have
   \[\frac{\max\{u(D'')\mid D''\in\cI|{\cl(D')}\}}{u(D')}
   \ge \frac{u(\{d_0\})}{u(\{d_i\})}
   =\frac{r\rho}{(r\rho)^{\frac{1}{m}}}>\frac{\rho}{2\epsilon}.\]
   Finally, if $r\rho<i\le mr\rho$, let $q^*\coloneqq \lceil i^*/(r\rho)\rceil$ (i.e., $d_i\in D^{a,q^*}$ for some $a\in[\rho]$), and then we have
   \begin{align*}
   \frac{\max\{u(D'')\mid D''\in\cI|{\cl(D')}\}}{u(D')}
   &\ge \frac{u(D^{1,q^*-1}\cup\dots\cup D^{\rho,q^*-1})}{u(\{d_i\})}\\
   &\ge \frac{r\cdot\rho\cdot (r\rho)^{(q^*-1)/m}}{(r\rho)^{q^*/m}}
   =(r\rho)^{1-1/m}
   >\frac{\rho}{2\epsilon}.
   \end{align*}

   Thus, by Theorem~\ref{thm:general_lower}, there exists a $(\Uadd,\Gvec{\rho,\epsilon})$-market without $\frac{\rho}{2\epsilon}$-stable matchings.
\end{proof}

\section*{Acknowledgments}
This work was supported by JST ACT-I Grant Number JPMJPR17U7, and by JSPS KAKENHI Grant Numbers 16K16005, 16KK0003, and 17H01787.

\bibliographystyle{abbrv}
\bibliography{approxmatching}

\newpage
\appendix
\section{Proof of Nonexistence of Stable Matchings}
In this section, we formally prove that 
there exist no stable matchings in the markets described in Examples~\ref{ex:nonexistence} and~\ref{ex:submo}.

\begin{proposition}\label{prop:nonexistence}
The market described in Example~\ref{ex:nonexistence} has no $\alpha$-stable matching for any $1\le \alpha<2$.
\end{proposition}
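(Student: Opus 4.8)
The plan is to show that every feasible matching admits an $\alpha$-blocking coalition for at least one hospital, so that no $\alpha$-stable matching can exist. Since both hospitals have cardinality utilities, condition (ii) in Definition~\ref{def:stability} simplifies to $|D'|>\alpha\cdot|\mu(h)|$, and because every feasible set for either hospital has size at most $2$, a hospital matched to two doctors can never be blocked (a blocking coalition would need more than $2\alpha\ge 2$ doctors). The whole argument thus reduces to controlling how many doctors each hospital can hold.

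First I would establish the key combinatorial obstruction: no feasible matching can assign two doctors to each hospital simultaneously. Every size-$2$ feasible set for $h_1$ (namely $\{d_1,d_3\}$ or $\{d_2,d_4\}$) meets every size-$2$ feasible set for $h_2$ (namely $\{d_1,d_4\}$ or $\{d_2,d_3\}$) in exactly one doctor, so the disjointness required of a matching, $\mu(h_1)\cap\mu(h_2)=\emptyset$, rules out $|\mu(h_1)|=|\mu(h_2)|=2$. Consequently at least one hospital, say $h$, satisfies $|\mu(h)|\le 1$.

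Next I would exhibit a size-$2$ blocking coalition for that hospital. Because $\alpha<2$ and $|\mu(h)|\le 1$, any feasible size-$2$ set whose doctors all weakly prefer $h$ is $\alpha$-blocking, since $2>\alpha\cdot|\mu(h)|$. For $h=h_1$ the two candidates are $\{d_1,d_3\}$ and $\{d_2,d_4\}$; the top-choice doctors $d_1,d_2$ always weakly prefer $h_1$, so $\{d_1,d_3\}$ blocks unless $\mu(d_3)=h_2$ and $\{d_2,d_4\}$ blocks unless $\mu(d_4)=h_2$. Both can fail only if $\mu(d_3)=\mu(d_4)=h_2$, which is infeasible because $\{d_3,d_4\}\notin\cI_{h_2}$; hence one of the two coalitions blocks $h_1$. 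The case $h=h_2$ is handled in the same way, using $\{d_1,d_4\}$ and $\{d_2,d_3\}$ together with the infeasibility of $\{d_1,d_2\}$ for $h_1$.

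Combining these steps contradicts $\alpha$-stability for an arbitrary feasible $\mu$, proving the proposition. I expect the main obstacle to be the third step, namely pinning down precisely that the two candidate coalitions cannot simultaneously fail; this is exactly where the ``crossing'' structure of the feasibility families $\cI_{h_1}$ and $\cI_{h_2}$ forces the instability, whereas the earlier reductions are routine once the cardinality simplification is in place.
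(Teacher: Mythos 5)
Your proof is correct, but it is organized differently from the paper's. The paper argues by exhaustive case analysis on $|\mu(h_1)|\in\{0,1,2\}$, with further sub-cases on the identity of $\mu(h_1)$ (and, when $|\mu(h_1)|=2$, it switches to exhibiting a blocking coalition for $h_2$), checking each configuration separately. You instead factor the argument into two symmetric claims: (a) the crossing structure of $\cI_{h_1}$ and $\cI_{h_2}$ --- every size-$2$ feasible set of one hospital intersects every size-$2$ feasible set of the other --- forbids both hospitals from holding two doctors at once, so some hospital $h$ has $|\mu(h)|\le 1$; and (b) for such an $h$, its two size-$2$ feasible sets cannot both fail to be $\alpha$-blocking, since simultaneous failure would force two specific doctors into the other hospital (e.g.\ $\mu(d_3)=\mu(d_4)=h_2$), contradicting feasibility there ($\{d_3,d_4\}\notin\cI_{h_2}$, and symmetrically $\{d_1,d_2\}\notin\cI_{h_1}$). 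This decomposition makes the mechanism of nonexistence explicit and avoids enumerating configurations, whereas the paper's enumeration is more mechanical but needs no structural observation beyond checking each case; both are equally rigorous. Your preliminary remark that a fully matched hospital can never be blocked (a coalition would need size $>2\alpha\ge 2$, exceeding every feasible set) is not actually needed for the contradiction --- blocking the under-filled hospital suffices --- but it correctly explains why restricting attention to that hospital loses nothing.
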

\begin{proof}
  Recall that we consider the following market.
  There are four doctors $D\coloneqq\{d_1,d_2,d_3,d_4\}$ and two hospitals $H\coloneqq\{h_1,h_2\}$. 
  The preferences of the doctors are $h_1\succ_{d_i}h_2\succ_{d_i}\emptyset$ for $i=1,2$ and $h_2\succ_{d_i}h_1\succ_{d_i}\emptyset$ for $i=3,4$.
  Suppose that each hospital has a cardinality utility.
  The feasibility families are $\cI_{h_1}\coloneqq 2^{\{d_1,d_3\}}\cup 2^{\{d_2,d_4\}}$ and $\cI_{h_2}\coloneqq 2^{\{d_1,d_4\}}\cup 2^{\{d_2,d_3\}}$.

  We prove by contradiction that, for $1\le \alpha<2$, no $\alpha$-stable matching exists in this market.
  Suppose that $\mu$ is an $\alpha$-stable matching.

  \noindent\textbf{Case 1:} $|\mu(h_1)|=0$. Here $\{d_1\}$ is an $\alpha$-blocking coalition for $h_1$.
  
  \noindent\textbf{Case 2:} $|\mu(h_1)|=2$. $|\mu(h_2)|\le 1$.
  Thus, $\{d_2,d_3\}$ and $\{d_1,d_4\}$ are $\alpha$-blocking coalitions for a case when $\mu(h_1)=\{d_1,d_3\}$ and $\mu(h_1)=\{d_2,d_4\}$, respectively.

  \noindent\textbf{Case 3:} $|\mu(h_1)|=1$.
  If $\mu(h_1)=\{d_1\}$, then $(d_3,h_2)\in \mu$ since otherwise $\{d_1,d_3\}$ is an $\alpha$-blocking coalition for $h_1$.
  Thus, $(d_4,h_2)\not\in \mu$ and hence $\{d_2,d_4\}$ is an $\alpha$-blocking coalition for $h_1$.
  Similarly, $\mu$ cannot be $\alpha$-stable when $\mu(h_1)=\{d_2\}$.
  In addition, $\{d_1,d_3\}$ and $\{d_2,d_4\}$ are $\alpha$-blocking coalitions for $h_1$ when $\mu(h_1)=\{d_3\}$ and $\mu(h_1)=\{d_4\}$, respectively.
  
  Therefore, for $1\le \alpha<2$, no $\alpha$-stable matching exists in this market.
\end{proof}

\begin{proposition}\label{prop:submo}
The market described in Example~\ref{ex:submo} has no $1.28$-stable matching.
\end{proposition}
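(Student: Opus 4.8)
The plan is to argue by contradiction: I assume a feasible matching $\mu$ is $\alpha$-stable for $\alpha=1.28$ and show that some hospital always admits an $\alpha$-blocking coalition. First I would record two preliminary facts. \textbf{(a)} The availability bookkeeping dictated by the preferences: $d_1,d_2$ find only $h_1$ acceptable, so $h_1\succeq_{d}\mu(d)$ holds automatically for $d\in\{d_1,d_2\}$; since $d_4$ ranks $h_1$ first, $d_4$ is always available to $h_1$, while $d_3$ is available to $h_1$ exactly when $\mu(d_3)\neq h_2$; symmetrically, $d_3$ is always available to $h_2$, and $d_4$ is available to $h_2$ exactly when $\mu(d_4)\neq h_1$. \textbf{(b)} The values of $u_{h_1}$ on the feasible (size $\le 2$) sets, of which only a few matter: $u_{h_1}(\{d_3,d_4\})=10+2\sqrt{17}$, $u_{h_1}(\{d_1,d_2\})=u_{h_1}(\{d_i,d_3\})=6+2\sqrt{17}$ for $i\in\{1,2\}$, and $u_{h_1}(\{d_i,d_4\})=7+\sqrt{17}$; together with $u_{h_2}(\{d_3\})=1$, $u_{h_2}(\{d_4\})=2$.

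The engine of the argument is the pair of exact identities $\alpha^\ast\,(6+2\sqrt{17})=10+2\sqrt{17}$ and $\alpha^\ast\,(7+\sqrt{17})=6+2\sqrt{17}$, where $\alpha^\ast=(1+\sqrt{17})/4$, each checked by a one-line rationalisation. Because $1.28<\alpha^\ast$, these become the strict inequalities $u_{h_1}(\{d_3,d_4\})>\alpha\cdot(6+2\sqrt{17})$ and $u_{h_1}(\{d_1,d_2\})>\alpha\cdot(7+\sqrt{17})$ at $\alpha=1.28$, and these are exactly the two blocking conditions I will invoke. I would also note $u_{h_2}(\{d_4\})=2>\alpha\cdot 1=\alpha\cdot u_{h_2}(\{d_3\})$, comfortably.

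Then I would split on $\mu(h_2)$, which by feasibility and acceptability lies in $\{\emptyset,\{d_3\},\{d_4\}\}$. If $\mu(h_2)=\emptyset$, then $d_3$, always available to $h_2$, makes $\{d_3\}$ an $\alpha$-blocking coalition for $h_2$. If $\mu(h_2)=\{d_4\}$, then all four doctors are available to $h_1$, so $\{d_3,d_4\}$ is feasible and, since any admissible $\mu(h_1)\subseteq\{d_1,d_2,d_3\}$ has utility at most $6+2\sqrt{17}$, the first strict inequality shows $\{d_3,d_4\}$ $\alpha$-blocks $h_1$. If $\mu(h_2)=\{d_3\}$, then $\alpha$-stability forces $d_4\in\mu(h_1)$: otherwise $d_4$ is unmatched and $\{d_4\}$ $\alpha$-blocks $h_2$ (as $2>\alpha$); but with $d_4\in\mu(h_1)$ and $d_3$ unavailable to $h_1$, the set $\mu(h_1)$ is one of $\{d_4\},\{d_1,d_4\},\{d_2,d_4\}$, and in each case $\{d_1,d_2\}$ is available and $\alpha$-blocks $h_1$ by the second strict inequality (the case $\{d_4\}$, with value $8$, is even easier). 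Every branch contradicts $\alpha$-stability, so no $1.28$-stable matching exists.

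The main obstacle is purely numerical: the instance is tuned so that at the exact threshold $\alpha^\ast=(1+\sqrt{17})/4$ the two candidate coalitions are \emph{tied}, hence the entire argument hinges on verifying the two algebraic identities and on the strict comparison $1.28<(1+\sqrt{17})/4$. Everything else — the three-way case split on $\mu(h_2)$ and the availability bookkeeping — is routine once those comparisons are in place.
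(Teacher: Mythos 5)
Your proof is correct, and it runs on the same numerical engine as the paper's: the exact identities $\frac{1+\sqrt{17}}{4}(6+2\sqrt{17})=10+2\sqrt{17}$ and $\frac{1+\sqrt{17}}{4}(7+\sqrt{17})=6+2\sqrt{17}$, the strict comparison $1.28<(1+\sqrt{17})/4$, and the blocking coalitions $\{d_1,d_2\}$ and $\{d_3,d_4\}$ for $h_1$ together with $\{d_3\}$ and $\{d_4\}$ for $h_2$. What differs is the decomposition, and the difference matters. The paper splits two ways on whether $d_4\in\mu(h_1)$: in its Case 1 it pins down $\mu(h_2)=\{d_3\}$ and blocks $h_1$ with $\{d_1,d_2\}$; in its Case 2 ($d_4\notin\mu(h_1)$) it asserts outright that $\{d_3,d_4\}$ blocks $h_1$. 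That assertion needs $h_1\succeq_{d_3}\mu(d_3)$, which fails in the subcase $\mu(d_3)=h_2$, a subcase the paper's proof does not address; there the correct move is that $d_4$, now unmatched, forms the blocking coalition $\{d_4\}$ against $h_2$ since $u_{h_2}(\{d_4\})=2>1.28\cdot u_{h_2}(\{d_3\})$. Your three-way split on $\mu(h_2)\in\{\emptyset,\{d_3\},\{d_4\}\}$, with the availability bookkeeping stated up front, supplies exactly this missing branch, so your argument is airtight where the paper's is elliptical; the paper's split buys only brevity. One further point in your favor: the appendix proof restates $u_{h_1}$ as an unweighted coverage function whose values (e.g., $u_{h_1}(\{d_1,d_2\})=6$) are inconsistent with the numbers it then uses, whereas the weighted values you compute are those of Example~\ref{ex:submo} in the main text and are the ones the case analysis actually requires.
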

\begin{proof}
  Recall that we consider the following market.
  There are four doctors $D\coloneqq\{d_1,d_2,d_3,d_4\}$ and two hospitals $H\coloneqq\{h_1,h_2\}$. 
  The preferences of the doctors are 
  $h_1\succ_{d_i}\emptyset\succ_{d_i}h_2$ for $i=1,2$,
  $h_2\succ_{d_3}h_1\succ_{d_3}\emptyset$, and
  $h_1\succ_{d_4}h_2\succ_{d_4}\emptyset$.
  Suppose the $\cI_{h_1}$ and $\cI_{h_2}$ are capacity constraints of rank $2$ and rank $1$, respectively.
  Let $u_{h_1}$ be a submodular utility such that
  \begin{align*}
    u_{h_1}(D')\coloneqq \left|\bigcup_{d_i\in D'}A_i\right|
  \end{align*}
  where $A_1=\{a_{1,1},a_{1,2},a_{1,3}\}$, $A_2=\{a_{2,1},a_{2,2},a_{2,3}\}$, $A_3=\{a_{3,1},a_{3,2},a_{3,3}\}$, and $A_4=\{a_{1,1},a_{1,2},a_{2,1},a_{2,2}\}$.
  In addition, let $u_{h_2}$ be an additive utility such that $u_{h_2}(d_3)=1$ and $u_{h_2}(d_4)=2$.

  We prove, by contradiction, that there exists no $1.28$-stable matching in this market (where $1.28<(1+\sqrt{17})/4$).
  Suppose that $\mu$ is a $1.28$-stable matching.
  
  \noindent\textbf{Case 1:} $(d_4,h_1)\in\mu(h_1)$. 
  In this case, $\mu(h_2)=\{d_3\}$ since otherwise $\{d_3\}$ is a $1.28$-blocking coalition for $h_2$.
  Thus, $\mu(h_1)$ is $\{d_4\}$, $\{d_1,d_4\}$, or $\{d_2,d_4\}$.
  Hence, $u_{h_1}(\mu(h_1))\le 7+\sqrt{17}$ and $\{d_1,d_2\}$ is a $1.28$-blocking coalition for $h_1$ by $u_{h_1}(\{d_1,d_2\})=6+2\sqrt{17}=(7+\sqrt{17})(1+\sqrt{17})/4$.
  
  \noindent\textbf{Case 2:} $(d_4,h_1)\not\in\mu(h_1)$. 
  In this case, $u_{h_1}(\mu(h_1))\le 6+2\sqrt{17}$ and hence $\{d_3,d_4\}$ is a $1.28$-blocking coalition for $h_1$ by $u_{h_1}(\{d_3,d_4\})=10+2\sqrt{17}=(6+2\sqrt{17})(1+\sqrt{17})/4$.
\end{proof}

\section{Proofs of \StP-hardness}\label{apx:hardness_existence}

\newtheorem*{thm:Sigma2Pb}{Theorem~\ref{thm:Sigma2Pb}}
\begin{thm:Sigma2Pb}
It is \StP-hard to decide whether a given $(\Ucard,\Gmat{3})$-market has a $1$-stable matching.
\end{thm:Sigma2Pb}
\begin{proof}
We give a reduction from \EAtDM{}.
Suppose that disjoint sets $S^\exists,S^\forall\subseteq X_1\times X_2\times X_3$ are given as an instance of \EAtDM.
Without loss of generality, we may assume that $S^\exists$ does not contain conflicting elements
(i.e., $|\{x_1,x_2,x_3\}\cap\{x'_1,x'_2,x'_3\}|=0$
for any distinct $(x_1,x_2,x_3),(x'_1,x'_2,x'_3)\in S^\exists$) since otherwise we can easily certificate that the instance is a \emph{yes}-instance.

We construct a $(\Ucard,\Gmat{3})$-market that has a $1$-stable matching if and only if the given instance is a \emph{yes}-instance.
Consider a market $(D,H,\succ_D,u_H,\cI_H)$ with 
$D\coloneqq\{d_1,d_2,d_3,d_4,d^*\}\cup \{d^e\}_{e\in S^\exists\cup S^\forall}\cup\{\bar{d}^e\}_{e\in S^\exists}$ and 
$H\coloneqq \{h^*,h_1,h_2\}\cup \{h^e\}_{e\in S^\exists}$.
The doctors' preference over the acceptable hospitals are given as:
{
\setlength{\columnsep}{0pt}
\begin{multicols}{4}
\begin{itemize}
\item $d^e\colon h^e,h^*$ $(e\in S^\exists)$,
\item $\bar{d}^e\colon h^e$ $(e\in S^\exists)$,
\item $d^e\colon h^*$ $(e\in S^\forall)$,
\item $d^*\colon h^*$,
\item $d_1\colon h^*,h_1,h_2$,
\item $d_2\colon h_1,h_2$,
\item $d_3\colon h_2,h_1$,
\item $d_4\colon h_2,h_1$.
\end{itemize}
\end{multicols}}
The feasibility constraints are
$\cI_{h^e}\coloneqq \Gcap^{(1)}$ $(e\in S^\exists)$,
$\cI_{h_1}\coloneqq 2^{\{d_1,d_3\}}\cup 2^{\{d_2,d_4\}}$,
$\cI_{h_2}\coloneqq 2^{\{d_1,d_4\}}\cup 2^{\{d_2,d_3\}}$, and 
$\cI_{h^*}\coloneqq \cI^{(1)}\cap\cI^{(2)}\cap \cI^{(3)}$ where
\begin{align*}
\cI^{(1)}&\coloneqq 
\left\{T\subseteq D'
~\middle|\!
\begin{array}{l}
x_1\ne y_1\ \ (\substack{\forall d^x,d^y\in T\setminus\{d_1,d^*\}\\\text{such that }x\ne y})\\[2mm]
\end{array}\!\!\!\!
\right\},\\
\cI^{(2)}&\coloneqq 
\left\{T\subseteq D'
~\middle|\!
\begin{array}{l}
x_2\ne y_2\ \ (\substack{\forall d^x,d^y\in T\setminus\{d_1,d^*\}\\\text{such that }x\ne y})\\[2mm]
|T\setminus\{d^*\}|\le |X_1|
\end{array}\!\!\!\!
\right\},\\
\cI^{(3)}&\coloneqq 
\left\{T\subseteq D'
~\middle|\!
\begin{array}{l}
x_3\ne y_3\ \ (\substack{\forall d^x,d^y\in T\setminus\{d_1,d^*\}\\\text{such that }x\ne y})\\[2mm]
|T\cap \{d_1,d^*\}|\le 1
\end{array}\!\!\!\!
\right\}
\end{align*}
with $D'\coloneqq \{d_1,d^*\}\cup\{d^e\}_{e\in S^\exists\cup S^\forall}$.
Note that each constraint can be represented by an intersection of at most three matroids.
Moreover, suppose that the utility of each hospital is cardinality.

Consider the case when the instance is a \emph{yes}-instance.
We show that there exists a $1$-stable matching in this case.
Let $\hat{T}^\exists\subseteq S^\exists$ be a certificate of the instance, i.e.,
$|\hat{T}^\exists\cup T^\forall|<|X_1|$ for all $T^\forall\subseteq S^\forall$ such that $\hat{T}^\exists\cup T^\forall$ is a matching.
Let $\hat{T}^\forall\in\argmax\{|T^\forall|\mid \hat{T}^\exists\cup T^\forall\in\cI_{h^*},\ T^\forall\subseteq S^\forall\}$.
Then, the matching
\begin{align*}
\mu^*=&\{(d^e,h^*)\}_{e\in\hat{T}^\exists\cup\hat{T}^\forall}
    \cup\{(\bar{d}^e,h^e)\}_{e\in \hat{T}^\exists}
    \cup \{(d^e,h^e)\}_{e\in S^\exists\setminus\hat{T}^\exists}
    \cup\{(d_1,h^*),(d_2,h_1),(d_3,h_2),(d_4,h_1)\}
\end{align*}
is $1$-stable.

Conversely, consider the case when the instance is a \emph{no}-instance.
We show that there exists no $1$-stable matching in this case.
Suppose to the contrary that $\mu$ is a $1$-stable matching.
Then, $\mu$ must contain $(d_1,h^*)$
since the submarket induced by $\{d_1,d_2,d_3,d_4\}$ and $\{h_1,h_2\}$ 
is equivalent to Example~\ref{ex:nonexistence}.
Let $\tilde{T}^\exists\coloneqq \{e\in S^\exists\mid d^e\in \mu(h^*)\}$ and $\tilde{T}^\forall\subseteq S^\forall$ such that $\tilde{T}^\exists\cup T^\forall$ is a matching and $|\tilde{T}^\exists\cup\tilde{T}^\forall|=|X_1|$
(such a $\tilde{T}^\forall$ exists since the instance is a \emph{no}-instance).
Then $\{d^e\}_{e\in\tilde{T}^\exists\cup\tilde{T}^\forall}\cup\{d^*\}~(\in\cI_{h^*})$ is a $1$-blocking coalition for $h^*$
because $|\{d^e\}_{e\in \tilde{T}^\exists\cup\tilde{T}^\forall}\cup\{d^*\}|=|X_1|+1$ and $|\mu(h^*)|\le |X_1|$.
\end{proof}

\newtheorem*{thm:Sigma2Pc}{Theorem~\ref{thm:Sigma2Pc}}
\begin{thm:Sigma2Pc}
For any nonnegative real $\epsilon<1/2$,
it is \StP-hard to decide whether a given $(\Ucard,\Gvec{2,\epsilon})$-market has a $1$-stable matching.
\end{thm:Sigma2Pc}
\begin{proof}
We give a reduction from \EASSP{}.
Suppose that disjoint sets $S^\exists,S^\forall\subseteq X_1\times X_2\times X_3$, weights $a$, and an integer $q$ are given as an instance of \EASSP.
Without loss of generality, we may assume that $\sum_{e\in S^\exists}a(e)<q<\sum_{e\in S^\exists\cup S^\forall}a(e)$
and $|S^\forall|\ge 3$

We construct a $(\Usub,\Gcap)$-market that has a $1$-stable matching if and only if the given instance is a \emph{yes}-instance.
Let $M\coloneqq \sum_{e\in S^\exists\cup S^\forall}a(e)$ and $n\coloneqq |S^\exists\cup S^\forall|$.
Consider a market $(D,H,{\succ_D},u_H,\cI_H)$ with 
$D\coloneqq\{d_1,d_2,d_3,d_4\}\cup\{d_1^*,\dots,d_n^*\}\cup \{d^e\}_{e\in S^\exists\cup S^\forall}\cup \{\bar{d}^e\}_{e\in S^\exists}$ and 
$H\coloneqq \{h^*,h_1,h_2\}\cup \{h^e\}_{e\in S^\exists}$.
The doctors' preference over the acceptable hospitals are given as:
\begin{multicols}{2}
\begin{itemize}
\item $d^e\colon h^e,h^*$ $(e\in S^\exists)$,
\item $\bar{d}^e\colon h^e$ $(e\in S^\exists)$,
\item $d^e\colon h^*$ $(e\in S^\forall)$,
\item $d_i^*\colon h^*$ $(i\in\{1,\dots,n\})$,
\item $d_1\colon h^*,h_1,h_2$,
\item $d_2\colon h_1,h_2$,
\item $d_3\colon h_2,h_1$,
\item $d_4\colon h_2,h_1$.
\end{itemize}
\end{multicols}
The feasibility constraints are
$\cI_{h^e}\coloneqq \Gcap^{(1)}$ $(e\in S^\exists)$,
$\cI_{h_1}\coloneqq 2^{\{d_1,d_3\}}\cup 2^{\{d_2,d_4\}}$,
$\cI_{h_2}\coloneqq 2^{\{d_1,d_4\}}\cup 2^{\{d_2,d_3\}}$, and 
$\cI_{h^*}\in\Gvec{2,\epsilon}$ that can be represented by the following weights:
\begin{itemize}
\item $w(d^e,1)=(M+a(e))/(Mn+q)$, $w(d^e,2)=(M-a(e))/(Mn-q)$ $(e\in S^\exists\cup S^\forall)$,
\item $w(d_i^*,1)=M/(Mn+q)$, $w(d^e,2)=M/(Mn-q)$ $(i\in\{1,\dots,n\})$
\item $w(d_1,1)=2M/(Mn+q)$, $w(d_1,2)=0$.
\end{itemize}
Moreover, suppose that the utility of each hospital is cardinality.

Consider the case when the instance is a \emph{yes}-instance.
We show that there exists a $1$-stable matching in this case.
Let $\hat{T}^\exists\subseteq S^\exists$ be a certificate of the instance, i.e.,
$\sum_{e\in \hat{T}^\exists\cup T^\forall}a(e)\ne q$ for all $T^\forall\subseteq S^\forall$.
Then, the matching 
\begin{align*}
\mu^*=&\{(d^e,h^*)\}_{e\in\hat{T}^\exists}
       \cup\{(d_1^*,h^*),\dots,(d_{n-|\hat{T}^\exists|-2}^*,h^*)\}\\
       &\cup\{(\bar{d}^e,h^e)\}_{e\in \hat{T}^\exists}
       \cup\{(d^e,h^e)\}_{e\in S^\exists\setminus\hat{T}^\exists}
       \cup\{(d_1,h^*),(d_2,h_1),(d_3,h_2),(d_4,h_1)\}
\end{align*}
is $1$-stable.

Conversely, consider the case when the instance is a \emph{no}-instance.
We show that there exists no $1$-stable matching in this case.
Suppose to the contrary that $\mu$ is a $1$-stable matching.
Then, $\mu$ must contain $(d_1,h^*)$
since the submarket induced by $\{d_1,d_2,d_3,d_4\}$ and $\{h_1,h_2\}$ 
is equivalent to Example~\ref{ex:nonexistence}.
Let $\tilde{T}^\exists=\mu(h^*)\cap S^\exists$ and $\tilde{T}^\forall\subseteq S^\forall$ such that $\sum_{e\in \hat{T}^\exists\cup T^\forall}a(e)=q$.
(such a $\tilde{T}^\forall$ exists since the instance is a \emph{no}-instance).
Then, $\tilde{T}^\exists\cup\tilde{T}^\forall\cup\{d_1^*,\dots,d_{n-|\tilde{T}^\exists\cup\tilde{T}^\forall|}\}~(\in\cI_{h^*})$ is a $1$-blocking coalition for $h^*$
because $|\tilde{T}^\exists\cup\tilde{T}^\forall\cup\{d_1^*,\dots,d_{n-|\tilde{T}^\exists\cup\tilde{T}^\forall|}\}|=n$ and $|\mu(h^*)|\le n-1$.
\end{proof}

\section{Competitive Ratio of Algorithm~\ref{alg:vecgreedy}}\label{sec:vecgreedy}

\newtheorem*{thm:vecgreedy}{Theorem~\ref{thm:vecgreedy}}
\begin{thm:vecgreedy}
Algorithm~\ref{alg:vecgreedy} is $\rho$-competitive for the online $(\Ucard,\Gvec{\rho,\epsilon})$-packing problem
and $\rho/\epsilon$-competitive for the online $(\Uadd,\Gvec{\rho,\epsilon})$-packing problem.
\end{thm:vecgreedy}
\begin{proof}
Consider an instance of the online $(\Uadd,\Gvec{\rho,\epsilon})$-packing problem $(D,u,\cI)$ with
$D=\{d_1,\dots,d_n\}$ and $\cI=\{D'\subseteq D\mid \sum_{d\in D'}w(d,j)\le 1~(j\in [\rho])\}$.
We write $w(d)$ for the value of $\max_{j\in[\rho]} w(d,j)$.

Let $\sigma$ be an order on $D$ and $i\in [n]$.
We argue that $\rho\cdot u(\ALG(\sigma,i))\ge u(\OPT(\sigma,i))$ when the utility is cardinal and 
$(\rho/\epsilon)\cdot u(\ALG(\sigma,i))\ge u(\OPT(\sigma,i))$ when the utility is additive.
If $\sum_{t=1}^i w(\sigma(t))\le 1$, then we have $u(\ALG(\sigma,i))=u(\OPT(\sigma,i))$ by $\ALG(\sigma,i)=\{\sigma(1),\dots,\sigma(i)\}$.
Thus, we assume that $\sum_{t=1}^i w(\sigma(t))> 1$.

Without loss of generality, we assume that $\{\sigma(1),\dots,\sigma(i)\}=\{d_1,\dots,d_i\}$ and 
\[
  \frac{u(d_1)}{w(d_1)}\ge \frac{u(d_2)}{w(d_2)}\ge \dots \ge \frac{u(d_i)}{w(d_i)}.
\]
Let $\ell$ be the largest index such that $\sum_{t=1}^{\ell}w(d_t)\le 1$.
Note that $\ell<i$ and $\sum_{t=1}^{\ell+1}w(d_t)> 1$.
Then, by a simple induction, we can see that $\{d_{1},\dots,d_{\ell}\}\subseteq \ALG(\sigma,i)$.

For the cardinality case, we have $u(\ALG(\sigma,i))=|\ALG(\sigma,i)|=\ell$.
Note that $w(d_1)\le w(d_2)\le\dots\le w(d_n)$ in this case.
We claim that $u(\OPT(\sigma,i))\le \rho\cdot \ell$ by contradiction.
Suppose that $u(\OPT(\sigma,i))\ge \rho\cdot \ell+1$ (i.e., $|\OPT(\sigma,i)|\ge \rho\cdot \ell+1$).
Then, by the pigeonhole principle, $|\{d\in \OPT(\sigma,i)\mid w(d)=w(d,j)\}|\ge \ell+1$ for some $j^*\in [\rho]$.
For such a $j^*$, we have
\[
  \sum_{d\in \OPT(\sigma,i)}w(d,j^*)\ge \sum_{t=1}^{\ell+1}w(d_t)>1,
\]
which contradicts the feasibility of $\OPT(\sigma,i)$.
Hence, Algorithm~\ref{alg:vecgreedy} is $\rho$-competitive for the online $(\Ucard,\Gvec{\rho,\epsilon})$-packing problem. 

Moreover, for the additive case, $u(\ALG(\sigma,i))$ is at least
\begin{align*} 
\sum_{t=1}^{\ell}u(d_t) \ge \frac{\sum_{t=1}^{\ell}w(d_t)}{\sum_{t=1}^{\ell+1}w(d_t)}\left(\sum_{t=1}^{\ell+1}u(d_t)\right)
&= 
\left(1-\frac{w(d_{\ell+1})}{\sum_{t=1}^{\ell+1}w(d_{\sigma(t)})}\right)
\left(\sum_{t=1}^{\ell+1}u(d_{\sigma(t)})\right)\\
&\ge \epsilon\left(\sum_{t=1}^{\ell+1}u(d_{\sigma(t)})\right)\\
&\ge \epsilon\cdot\max\left\{u(D')\mid\! \begin{array}{l}w(D')\le 1,\\ D'\subseteq\{d_1,\dots,d_i\}\end{array}\!\!\right\}\\
&\ge \frac{\epsilon}{\rho}\cdot\max\left\{u(D')\mid\! \begin{array}{l}\sum_{d\in D'}w(d)\le \rho,\\ D'\subseteq\{d_1,\dots,d_i\}\end{array}\!\!\right\}\\
&\ge \frac{\epsilon}{\rho}\cdot\max\left\{u(D')\mid\! \begin{array}{l}\sum_{d\in D'}w(d,j)\le 1~(\forall j\in[\rho]),\\ D'\subseteq\{d_1,\dots,d_i\}\end{array}\!\!\right\}.
\end{align*}
Here, the first inequality holds since $\frac{\sum_{t=1}^{\ell}u(d_t)}{\sum_{t=1}^{\ell}w(d_{t})}$ is monotone nonincreasing for $\ell$
and the second inequality holds by $\sum_{t=1}^{\ell+1}w(d_t)>1$ and $w(d_{\ell+1})\le (1-\epsilon)$.
Thus, Algorithm~\ref{alg:vecgreedy} is $\frac{\rho}{\epsilon}$-competitive for the online $(\Uadd,\Gvec{\rho,\epsilon})$-packing problem. 
\end{proof}

\end{document}